\newtheorem{lemma}{Lemma}
\newcommand{\psf}{\mathcal{P}}
 \newtheorem{theorem}{Theorem}[section]
 \newtheorem{proposition}[theorem]{Proposition}
 \newtheorem{corollary}[theorem]{Corollary}
 \theoremstyle{definition}
 \newtheorem{definition}[theorem]{Definition}
 \theoremstyle{remark}
\title{Generalized Vietoris Bisimulations}
\author{Sebastian Enqvist \and Sumit Sourabh}
\begin{document}
\maketitle
\begin{abstract}
We introduce and study bisimulations for coalgebras on Stone spaces \cite{KKV05}. Our notion of bisimulation is sound and complete for behavioural equivalence, and generalizes Vietoris bisimulations \cite{BeFoVe10}. The main result of our paper is that bisimulation for a \textbf{Stone} coalgebra is the topological closure of bisimulation for the underlying \textbf{Set} coalgebra.  
\end{abstract}
\section{Introduction}

The notion of \emph{bisimulation} plays an important role in several areas of computer science and mathematics: concurrency theory \cite{Milner1982,Milner91,Park81} , modal logic \cite{vanBenthem83,BRV01}, formal verification \cite{Cleaveland96} and set theory \cite{Aczel88,forti1983set}. For a recent survey on bisimulation, we refer to \cite{Sangiorgi2009}. More generally, \textit{coalgebras} provide a uniform framework for  studying the behaviour of state-based systems \cite{Jacobs97atutorial,Rutten2003}, and bisimulations between coalgebras play an important role in this context as proof methods for \textit{behavioural equivalence}.
In this paper, we introduce and study bisimulations for \emph{Stone coalgebras} \cite{KKV05}, that is, coalgebras for which the state space is equipped with a Stone topology.

Stone coalgebras have previously been studied mainly in the context of coalgebraic modal logic, where they generalize descriptive general frames. From the perspective of algebraic duality, descriptive general frames are more natural than standard Kripke frames, and it is well known that every normal modal logic is complete with respect to a class of descriptive general frames \cite{BRV01}, unlike Kripke frames.  Descriptive general frames are isomorphic to coalgebras for the Vietoris functor on $\mathbf{Stone}$ \cite{KKV05}, which makes Stone coalgebras important in coalgebraic modal logic. Moreover, since the category of coalgebras for Vietoris functor on $\mathbf{Stone}$ is dual to the category of modal algebras, they provide a natural semantics for \emph{finitary modal logics} \cite{KKV05}.

In \cite{KKP05},  Kupke, Kurz and Pattinson give a more general construction for lifting an endofunctor on the category $\mathbf{Set}$ to an endofunctor on the category $\mathbf{Stone}$, by making use of  a set of \textit{predicate liftings} for the functor. If the endofunctor on $\mathbf{Set}$ is the (covariant) power set functor, then the lifted endofunctor on $\mathbf{Stone}$, using their construction, is the Vietoris functor \cite{johnstone1986}. Here, we shall consider a slight variation of this construction, using the Moss-style $\nabla$-modality \cite{Moss99} for $T$ rather than predicate liftings.  In this respect our approach is also similar to the construction in \cite{VVV13}, which  introduces an endofunctor $V_T$ on the category of frames, parametrized by an endofunctor $T$ on the category $\mathbf{Set}$ that satisfies certain constraints. 

A notion of bisimulation for descriptive general frames, called \emph{Vietoris bisimulation}, was defined in \cite{BeFoVe10}. In this paper, we generalize this work by introducing a notion of bisimulation for coalgebras on Stone spaces. To the best of our knowledge, bisimulations for colagebras on topological spaces have not been studied before, and this paper aims to fill this gap. Given an endofunctor $T:\mathbf{Set}\to\mathbf{Set}$, we introduce a notion of bisimulation, called \emph{neighbourhood bisimulation}, on Stone coalgebras \cite{KKV05} for the lifted functor $\widehat{T}:\mathbf{Stone}\to\mathbf{Stone}$. (The reason for the name ``neighbourhood bisimulation'' is that the concept bears some resemblance to bisimulations studied in the context of neighbourhood semantics for modal logic in \cite{HaKuPa09}.) In the case where $T$ is the powerset functor, closed neighbourhood bisimulations are shown to coincide with Vietoris bisimulations. We also show that neighbourhood bisimulations  are always sound and complete with respect to behavioural equivalence. 


The main result of our paper is that the topological closure of a neighbourhood bisimulation between two $\widehat{T}$-coalgebras is always a  neighbourhood bisimulation too. Since closed neighbourhood bisimulations generalize Vietoris bisimulations, the main result in \cite{BeFoVe10} stating that the closure of a Kripke bisimulation is a Vietoris bisimulation follows as an immediate corollary to our result. Moreover, our proof is simpler than the one presented in \cite{BeFoVe10}.  It also follows that the topologically closed neighbourhood bisimulations between any pair of $\widehat{T}$-coalgebras form a complete lattice ordered by inclusion, again generalizing a result for Vietoris bisimulations proved in \cite{BeFoVe10}. 

\section{Preliminaries}
This section introduces the basic concepts from coalgebra and coalgebraic modal logic that we will need here. Readers that are already familiar with these fields can skip it. Familiarity with basic category theory (functors, natural transformations, adjunctions etc.) will be assumed throughout the paper. 
\subsection{Coalgebras}
 Coalgebras for a functor are used in theoretical computer science as models for various sorts of process-like systems: labelled transition systems \cite{Plotkin04}, automata \cite{Venema2006}, probabilistic transition systems \cite{deVink1999} etc.  Given any endofunctor on a category, the corresponding category of coalgebras for this functor is constructed as follows:
\begin{definition}
Let $\mathbf{C}$ be a category and let $T :\mathbf{C} \rightarrow \mathbf{C}$ be an endofunctor. A \emph{$T$-coalgebra} is a pair $(X, \sigma)$, where $\sigma : X\rightarrow T X$ is a morphism in $\mathbf{C}$. A morphism between two coalgebras $(X, \sigma)$ and $(X',\sigma')$ is a morphism $f$ in $\mathbf{C}$ such that $\sigma\circ f= Tf\circ\sigma$, i.e. the following diagram commutes:
\begin{center}
\begin{tikzpicture}
  \matrix (m) [matrix of math nodes, row sep=3em,column sep=4em,minimum width=2em]
  {
     X & X' \\
     TX &TX' \\};
  \path[-stealth]
    (m-1-1) edge node [left] {$\sigma$} (m-2-1)
            edge node [above] {$f$} (m-1-2)
    (m-2-1.east|-m-2-2) edge node [below] {$Tf$} (m-2-2)
    (m-1-2) edge node [right] {$\sigma'$} (m-2-2);
\end{tikzpicture}
\end{center}

\end{definition}

If $\mathbf{C}$ is the category $\mathbf{Set}$ of sets and mappings on sets, and $T = \psf$, the contravariant powerset functor (with actions on morphisms defined by  $\psf h (Z) = h[Z]$), then the corresponding coalgebras are \textit{Kripke frames} and coalgebra morphisms are \textit{$p$-morphisms}, a fundamental concept in the model theory of modal logic. This has lead to the generalization from standard modal logic to \textit{coalgebraic modal logic} (see, e.g., \cite{KupkeP11,CirsteaEA08}) , where modal logics are viewed as specification languages for coalgebras, and each functor $T : \mathbf{Set} \rightarrow \mathbf{Set}$ comes with an associated modal logic. A common approach to coalgebraic modal logic uses \textit{predicate liftings} for a functor \cite{Schroder:2008}, while the original approach introduced by Lawrence moss \cite{Moss99} uses relation lifting  (see \cite{Le08} for a comparison). If $T$ preserves weak pullbacks, then the canonical choice of relation lifting is given by the \textit{Barr extension} of a functor, which assigns to each pair of sets $X$ and $X'$ and each binary relation $R\subseteq X\times X'$ the lifted relation
\[\overline{T}(R) := \{((T \pi)(\rho),(T \pi')(\rho)) :\rho\in T R\},
\] 
where $\pi: R \rightarrow X$ and $\pi': R \rightarrow X'$ are the projection maps. More generally, one can consider an arbitrary \textit{lax extension} $L$ for the functor $T$:
 \begin{definition}
A relation lifting $L$ for a set endofunctor $T$ is a \emph{lax extension} of $T$ if it satisfies the following conditions for all relations $R,R'\subseteq X\times Z$ and $S\subseteq Z\times Y$, and all functions $f:X\to Z$:
 \begin{description}
\item[L1:] $R'\subseteq R$ implies $LR'\subseteq LR$
\item[L2:] $LR ; LS\subseteq L(R;S)$
\item[L3:] $Tf\subseteq Lf$
\end{description}
A lax extension is said to be \emph{symmetric} if, for any relation $R$, we have $L(R^\dagger) = (LR)^\dagger$ (where the operator $(-)^\dagger$ takes a binary relation to its converse). 
\end{definition}
Any symmetric lax extension then provides a generalized ``nabla modality'' as follows:
 \begin{definition}
   Let $T$ be a covariant set functor. A distributive law of $T$ over a (co- or contravariant) set functor $M$ is a natural transformation $\nabla:TM\to MT$.
  \end{definition}
Consider the contravariant powerset functor $Q : \mathbf{Set} \rightarrow \mathbf{Set}$, which acts on objects as the covariant powerset functor, and has its action on a map $h : X \rightarrow Y$ defined by $Qh : Z \mapsto h^{-1}[Z]$, where $Z \in QY$. Any symmetric lax extension for $T$ gives rise to a distributive law $\nabla : TQ \rightarrow QT$ by the assignment
$$\nabla_X : \varphi \mapsto  \{\alpha \in TX \mid \alpha (L\in_X) \varphi \}$$
where $\varphi \in TQX$ and $\in_X \subseteq X \times QX$ is the membership relation. This distributive law can be interpreted as the semantics of a modal operator: if the subsets of $X$ are thought of as propositions over $X$, then a member $\varphi$ of $TQX$ can be seen as a modal formula built up from such propositions, and $\nabla_X \varphi \subseteq TX$ is a proposition over $TX$ which gives the interpretation of the formula. With this interpretation, it makes more sense to consider the \textit{finitary version} of $T$, denoted $T^\omega$. This functor sends a set $X$ to the set
$$\coprod \{T Y \mid Y \text{ a finite subset of } X\}$$
and the action of $T^\omega$ on morphisms should be fairly obvious. Obviously, we can construct the natural transformation $\nabla : T^\omega Q \rightarrow QT$ using the lax extension $L$ in the same manner  as before. This gives the semantics of \textit{finitary} modal formulas, built up from  finitely many propositions over a set $X$.

From now on we assume that we are given a functor $T : \mathbf{Set} \rightarrow \mathbf{Set}$ and an appropriate lax extension $L$ for it. In the case where $T$ preserves weak pullbacks, as in the case of the covariant powerset functor, the canonical choice would be the Barr extension. For a detailed overview on relation lifting, see e.g. \cite{KKV12}. For more on lax extensions and their role in coalgebraic modal logic, see e.g. \cite{MartiV12}. 

\subsection{Bisimulation and behavioural equivalence}

A basic concept in coalgebra theory is that of \textit{behavioural equivalence}:

\begin{definition}
Let $\mathbf{C}$ be any category equipped with a forgetful functor $U : \mathbf{C} \rightarrow \mathbf{Set}$, let $(X,\sigma)$ and $(X',\sigma')$ be coalgebras for a functor $T : \mathbf{C} \rightarrow \mathbf{C}$ and let $u \in UX$ and $u' \in UX'$. Then we say that $u$ and $u'$ are \emph{behaviourally equivalent} if there exists a $T$-coalgebra $(Y,\tau)$ and a pair of coalgebra maps $h : (X,\sigma) \rightarrow (Y,\tau)$ and $h' : (X',\sigma') \rightarrow (Y,\tau)$ such that $Uh(u) = Uh' (u')$.  
\end{definition}
The structure $(X,\sigma,u)$ in this definition is called a \textit{pointed} coalgebra, and we shall refer to $u$ as a ``state'' of the coalgebra.
When coalgebras represent some type of process or computation, such as labelled transition systems in concurrency theory, behaviourally equivalent states are taken to represent ``essentially the same process''. In non-wellfounded set theory, coalgebras for the powerset functor represent systems of equations, and the states are thought of as \textit{variables}.  Two variables from a pair of systems of equations are behaviourally equivalent iff they define the same set.  

The usual proof method for showing that two pointed coalgebras are behaviourally equivalent is by \textit{coinduction}: two states are behaviourally equivalent if we can establish a \textit{bisimulation} between them.  The relation lifting $L$ can be used to define a notion of bisimilarity for $T$-coalgebras:
 \begin{definition} An \emph{$L$-bisimulation} between $(X,\alpha)$ and $(Y,\beta)$ is a relation $R\subseteq X\times Y$ such that $(\alpha(x),\beta(y))\in LR$ for all $(x,y)\in R$. A state $x$ of $(X,\alpha)$ is \emph{$L$-bisimilar} to a state $y$ of $(Y,\beta)$ if there is an $R\subseteq X\times Y$ that is an $L$-bisimulation between $(X,\alpha)$ and $(Y,\beta)$ with $(x,y)\in R$.
 \end{definition}

If $L$-bisimilarity coincides with behavioural equivalence in every pair of coalgebras, then we say that $L$-bisimilarity is \textit{sound and complete} for behavioural equivalence. The class of functors that have a lax extension $L$ such that $L$-bisimilarity is sound and complete for behavioural equivalence have been characterized in \cite{MartiV12}.

Another useful method to establish a behavioural equivalence is by \textit{induction along the terminal sequence} of a functor. Let $T$ be any endofunctor on a category $\mathbf{C}$. If $\mathbf{C}$ has all small filtered colimits (and therefore an initial object), then the \textit{initial sequence} for $T$ can be constructed, and consists of an object $\mathcal{A}_\xi$ for each ordinal $\xi$, together with a unique morphism $h_\xi^\zeta : \mathcal{A}_\xi \rightarrow \mathcal{A}_\zeta$ for all ordinals $\xi \leq \zeta$. It has the following properties:
\begin{itemize}
\item $\mathcal{A}_0$ is the initial object in $\mathbf{C}$
\item More generally, for any limit ordinal $\xi$, $\mathcal{A}_\xi$ is the colimit of the (obviously filtered) diagram consisting of all the objects of the initial sequence below $\xi$, together with all the  morphisms of the initial sequence with domain and codomain below $\xi$. Given $\zeta \geq \xi$, the map $h^\zeta_\xi : \mathcal{A}_\xi \rightarrow \mathcal{A}_\zeta$ is the unique connecting map from $\mathcal{A}_\xi$ to $\mathcal{A}_\zeta$ as the vertex of the cocone consisting of all morphisms $h^\zeta_\rho$ for $\rho < \xi$.
\item For any pair of ordinals $\xi \leq \zeta$, we have $\mathcal{A}_{\xi + 1} = T(\mathcal{A}_\xi)$ and $h_{\xi + 1}^{\zeta + 1} = T(h_\xi^\zeta)$.
\end{itemize}
If, for some ordinal $\xi$, the map $h^{\xi+1}_\xi : \mathcal{A}_\xi \rightarrow T \mathcal{A}_\xi$ is an isomorphism, then we say that the initial sequence \textit{stabilizes} at the ordinal $\xi$. (The inverse of $h^{\xi + 1}_\xi$ will then provide the \textit{initial algebra} for the functor $T$.) It is easy to see that if $T$ is finitary ($T$ preserves filtered colimits) then the initial sequence for $T$ stabilizes at $\omega$.

Dually, if $\mathbf{C}$ has all small co-filtered limits (and hence a terminal object), the \textit{terminal sequence} for $T$ can be constructed and consists of objects $\mathcal{Z}_\xi$ for all ordinals $\xi$, together with a unique morphism $g^{\zeta}_\xi : \mathcal{Z}_{\zeta } \rightarrow \mathcal{Z}_{\xi}$ for each pair of ordinals $\xi \leq \zeta$ (note the reversed order). It has the following properties:
\begin{itemize}
\item $\mathcal{Z}_0$ is the terminal object in $\mathbf{C}$
\item More generally, for any limit ordinal $\xi$, $\mathcal{Z}_\xi$ is the limit of the (co-filtered) diagram consisting of all the objects of the terminal sequence below $\xi$, together with all the  morphisms of the terminal sequence with domain and codomain below $\xi$. Given $\zeta \geq \xi$, the map $h^\zeta_\xi : \mathcal{Z}_\zeta \rightarrow \mathcal{Z}_\xi$ is the unique connecting map from $\mathcal{Z}_\zeta$ to $\mathcal{Z}_\xi$, where $\mathcal{Z}_\zeta$ is considered as the vertex of the cone consisting of all morphisms $h^\zeta_\rho$ for $\rho < \xi$.
\item For any pair of ordinals $\xi \leq \zeta$, we have $\mathcal{A}_{\xi + 1} = T(\mathcal{A}_\xi)$ and $h_{\xi + 1}^{\zeta + 1} = T(h_\xi^\zeta)$.
\end{itemize}
If for some ordinal $\xi$, the morphism $h_{\xi}^{\xi + 1}$ is an isomorphism with inverse $(h_{\xi}^{\xi + 1})^{-1} : \mathcal{Z}_\xi \rightarrow T\mathcal{Z}_\xi$, then the pair $(\mathcal{Z}_\xi,(h_{\xi}^{\xi + 1}),(h_{\xi}^{\xi + 1})^{-1})$ is called the \textit{final coalgebra} for the functor $T$, and has the universal property that, for every $T$-coalgebra $(X,\alpha)$ there is a unique coalgebra morphism $f : X \rightarrow \mathcal{Z}_\xi$. Moreover, this unique coalgebra morphism can be constructed as follows:

Fix any $T$-coalgebra $(X,\alpha)$. Then, for every ordinal $\xi$, there is a map $beh_\xi^\alpha : X \rightarrow \mathcal{Z}_\xi$, called the \textit{behaviour map} at the ordinal $\xi$, with the property that for $\xi \leq \zeta$ we have
$$h_{\xi}^\zeta \circ beh^\alpha_\zeta = beh^\alpha_\xi$$
These maps are defined inductively as follows:
\begin{itemize}
\item For $\xi = 0$, there is only one choice for the map $beh_0^\alpha$ since $\mathcal{Z}_0$ is a terminal object
\item For $\xi$ a limit ordinal, the maps $beh_\rho^\alpha$ for $\rho < \xi$ will form a cone for the diagram consisting of all objects and morphisms in the terminal sequence below $\xi$, with vertex $X$. Hence, we can take $beh_\xi^\alpha$ to be the unique connecting map from $X$ to the limit $\mathcal{Z}_\xi$.
\item Given that we have constructed the map $beh^\alpha_\xi$, the map $beh^\alpha_{\xi + 1}$ is defined to be $T(beh^\alpha_\xi) \circ \alpha$.
\end{itemize}
If the terminal sequence stabilizes to yield a final $T$-coalgebra at $\xi$, then $beh_\xi^\alpha$ is the unique coalgebra morphism from $(X,\alpha)$ to the final coalgebra. If there is a forgetful functor $U : \mathbf{C} \rightarrow \mathbf{Set}$, this justifies the method of \textit{terminal sequence induction} to prove that two states $u,v$ from a pair of coalgebras $(X,\alpha)$ and $(Y,\beta)$ are behaviourally equivalent: it suffices to prove, by transfinite induction, that $beh_\xi^\alpha(u) = beh_\xi^\beta(v)$ for every ordinal $\xi$. 

\section{From Set Functors to Endofunctors on $\mathbf{Stone}$}
The category $\mathbf{Stone}$ is the category of Stone spaces (i.e. compact and totally disconnected spaces) with continuous maps as morphisms, and the category $\mathbf{BA}$ has Boolean algebras as objects and Boolean algebra homomorphisms as arrows. We fix notation for the forgetful functor $U : \mathbf{Stone} \rightarrow \mathbf{Set}$ and the forgetful functor $V : \mathbf{BA} \rightarrow \mathbf{Set}$. We consider the contravariant powerset functor as a contravariant functor $Q : \mathbf{Set} \rightarrow \mathbf{BA}$, and we also have a  contravariant functor $S : \mathbf{BA} \rightarrow \mathbf{Stone}$ sending each Boolean algebra to the Stone space of its ultrafilters (with the topology generated by the clopen basis of sets $\{F \in SA \mid a \in F\}$ for some $a \in A$). Finally, we have a  contravariant functor $P : \mathbf{Stone} \rightarrow \mathbf{BA}$ sending a Stone space to the Boolean algebra of its clopen subsets.

By Stone duality, the functors $S$ and $P$ constitute an equivalence of categories between $\mathbf{Stone}$ and $\mathbf{BA}^{op}$.  
%
%
%

It is well known that the covariant powerset functor on $\mathbf{Set}$ has a similar counterpart in the \textit{Vietoris functor} $\mathcal{V} : \mathbf{Stone} \rightarrow \mathbf{Stone}$. Its action on a Stone space $\mathbb{X}$ is to let $\mathcal{V} \mathbb{X}$ consist of all the closed sets of $\mathbb{X}$, with the topology generated by all sets of the form
$$(1) \quad\quad \Box Z = \{S \mid S \subseteq Z\}$$
for $Z$ clopen in $\mathbb{X}$, and all sets of the form
$$(2) \quad \quad \Diamond Z = \{S \mid S \cap Z \neq \emptyset\}$$
for $Z$ clopen in $\mathbb{X}$. Coalgebras for the Vietoris functor turn out to correspond exactly to descriptive general frames known from the literature on modal logic.  

More generally, it is possible to define for every set functor $T$ a ``Stone companion'' $\widehat{T} : \mathbf{Stone} \rightarrow \mathbf{Stone}$, in a way that generalizes the Vietoris construction. One such construction is presented in \cite{KKP05}, making use of predicate liftings for the functor $T$. This approach would work for our purposes here, but we find it more elegant to make use of the nabla modality, similar to the construction used in \cite{VVV13}. We proceed as follows: recall that we are given a distributive law $\nabla : T^\omega Q \rightarrow QT$, provided by the lax extension $L$ (or simply the Barr extension of $T$, if $T$ preserves weak pullbacks). Now, for every Stone space $\mathbb{X} = (X,\tau)$ there is the inclusion 
$$\iota_\mathbb{X} : V P\mathbb{X} \rightarrow QX$$
and these inclusions form a natural transformation $\iota : P \rightarrow QU$. So we get a natural transformation
$$ \nabla U \circ T^\omega \iota : TVP \rightarrow QTU $$
We shall abuse notation and from now on simply write this natural transformation as $\nabla : T^\omega VP \rightarrow QTU$. Now, given a Stone space $\mathbb{X}$, let $\langle Im(\nabla_\mathbb{X}) \rangle$ be the subalgebra of $QTX$ generated by the elements of the form $\nabla_\mathbb{X} \varphi$, for $\varphi \in T^\omega V P \mathbb{X}$. It is easy to see that this extends to a functor
$$ \ddot{T} : \mathbf{Stone} \rightarrow \mathbf{BA}^{op}$$
by letting, for $h : \mathbb{X} \rightarrow \mathbb{Y}$, the map $\ddot{T} : \langle Im(\nabla_\mathbb{Y}) \rangle \rightarrow \langle Im(\nabla_\mathbb{X}) \rangle$ simply be the restriction of $QTUh : QTU\mathbb{Y} \rightarrow QTU \mathbb{X}$ to the subalgebra $\langle Im(\nabla_\mathbb{X}) \rangle$. That this map actually goes into the algebra $Im(\nabla_\mathbb{X}) \rangle$ follows by naturality of $\nabla$. We now simply set
$$\widehat{T} = S \circ \ddot{T}$$   
Alternatively, we may describe the functor $\widehat{T}$ as simply the dual of a functor on $\mathbf{BA}$ introduced in the paper \cite{KKV12}.  There, the functor $\tilde{T} : \mathbf{BA} \rightarrow \mathbf{BA}$, parametric in a set functor $T$, is constructed as follows: first, let $\mathcal{L}_0$ be the left adjoint to the forgetful functor from the category of algebras of signature $(\neg,\wedge,\vee)$ to $\mathbf{Set}$, so that $\mathcal{L}_0 X$ is the free algebra of Boolean terms generated by $X$. The natural transformation $\sigma : \mathcal{L}_0  \circ Q  \rightarrow Q$ has its component at a set $X$ defined by letting $\sigma_X : \mathcal{L}_0 Q X \rightarrow QX$ be provided by the co-unit of the adjunction $\mathcal{L}_0 \dashv U$. In other words, we extend the identity map on $QX$ to the map $\sigma_X $ using freeness of the term algebra generated by $QX$. For any set $X$, an element $\varphi \in \mathcal{L}_0 T^\omega \mathcal{L}_0 QX$ is called a \textit{one-step formula}, and its \textit{one-step semantics} $\|\varphi\|^X_1$ in a set $X$ is given by the set
$$ \sigma_{TX} \circ \mathcal{L}_0\nabla_X \circ \mathcal{L}_0T^\omega \sigma_{X}(\varphi) \subseteq TX $$
Given a Boolean algebra $A$, we define the algebra $\tilde{T} A$ to be 
$$\{\|\varphi\|_1^{USA} \mid  \varphi \in \mathcal{L}_0 T^\omega \mathcal{L}_0(PSA)\}$$
The action on morphisms is defined by setting, for $h : A \rightarrow B$,
$$\tilde{T}h(\| \varphi \|_1^{USA}) = \| \mathcal{L}_0 T^\omega \mathcal{L}_0 PS h (\varphi)\|_1^{USB}$$
The actual definition of the functor $\tilde{T}$ in \cite{KKV12} is different from the one given here, but the fact that their definition provides a functor that is naturally isomorphic to  the present one is an easy corollary of the ``one-step soundness and completeness'' theorem proved in \cite{KKV12}, Theorem  7.5. It is easy to verify the following:
\begin{proposition}
The functor $\widehat{T}$ is naturally isomorphic to $S \circ \tilde{T} \circ P$.
\end{proposition}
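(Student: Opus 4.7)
The plan is to exploit the fact that \(S : \mathbf{BA}^{op} \to \mathbf{Stone}\) is half of an equivalence of categories, so that establishing \(\widehat{T} \cong S \tilde{T} P\) reduces to exhibiting a natural isomorphism \(\ddot{T} \cong \tilde{T} P\) as functors \(\mathbf{Stone} \to \mathbf{BA}^{op}\). Under the Stone duality identifications \(USP\mathbb{X} \cong X\) and \(PSP\mathbb{X} \cong P\mathbb{X}\), both \(\ddot{T}\mathbb{X}\) and \(\tilde{T}P\mathbb{X}\) are naturally viewed as Boolean subalgebras of \(QTX\), and the key claim is that they in fact coincide.

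First I would unfold the definition \(\|\varphi\|_1^X = \sigma_{TX} \circ \mathcal{L}_0 \nabla_X \circ \mathcal{L}_0 T^\omega \sigma_X(\varphi)\) for \(\varphi \in \mathcal{L}_0 T^\omega \mathcal{L}_0 P\mathbb{X}\). The inner \(\sigma_X\) evaluates Boolean terms over \(P\mathbb{X}\) to clopens, so \(\mathcal{L}_0 T^\omega \sigma_X(\varphi)\) is a Boolean term whose leaves lie in \(T^\omega V P \mathbb{X}\); applying \(\mathcal{L}_0 \nabla_X\) leafwise produces a Boolean term over \(QTX\) whose leaves have the form \(\nabla_\mathbb{X}\psi\) for \(\psi \in T^\omega V P \mathbb{X}\); finally \(\sigma_{TX}\) evaluates this to a Boolean combination of such elements in \(QTX\). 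Hence \(\tilde{T}P\mathbb{X} \subseteq \langle Im(\nabla_\mathbb{X})\rangle = \ddot{T}\mathbb{X}\). Conversely, \(\tilde{T}P\mathbb{X}\) is itself a Boolean subalgebra of \(QTX\), since the Boolean operations on formulas are interpreted by \(\sigma_{TX}\) as the actual Boolean operations in \(QTX\), and it contains every generator \(\nabla_\mathbb{X}\psi\) (take the formula consisting of a singleton term with \(\psi\) lifted along the unit of \(\mathcal{L}_0 \dashv U\)); so it contains the generated subalgebra \(\ddot{T}\mathbb{X}\).

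With the algebras identified, naturality is what remains. For a continuous \(h : \mathbb{X} \to \mathbb{Y}\), \(\ddot{T}h\) is by construction the restriction of \(QTUh : QTY \to QTX\). One then checks, from the definition of \(\tilde{T}\) on morphisms, that \(\tilde{T}(Ph)(\|\varphi\|_1^Y) = \|\mathcal{L}_0 T^\omega \mathcal{L}_0 Ph(\varphi)\|_1^X\) equals \(QTUh(\|\varphi\|_1^Y)\); this substitution identity reduces to the naturality square of \(\nabla : T^\omega Q \to QT\) at the map \(Uh\), together with naturality of \(\sigma\). Thus \(\ddot{T}\) and \(\tilde{T}P\) agree on both objects and morphisms, and applying \(S\) yields the desired natural isomorphism \(\widehat{T} \cong S \tilde{T} P\).

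The main obstacle is essentially notational bookkeeping: keeping track of the Stone duality identifications and the variance of the functors involved. Mathematically the content is immediate once one observes that the one-step semantics is designed precisely so that \(\tilde{T}P\mathbb{X}\) is the Boolean-algebra closure of the image of \(\nabla_\mathbb{X}\); the only step with any genuine content is the naturality check, and this follows directly from naturality of \(\nabla\).
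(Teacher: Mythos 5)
Your argument is correct, and it is precisely the verification the paper leaves to the reader (the proposition is stated without proof): reduce along Stone duality to comparing $\ddot{T}$ with $\tilde{T}\circ P$, observe that the one-step semantics $\|\varphi\|_1^X$ is by construction a Boolean combination of elements $\nabla_\mathbb{X}\psi$ so that $\tilde{T}P\mathbb{X}$ and $\langle Im(\nabla_\mathbb{X})\rangle$ coincide as subalgebras of $QTX$, and check compatibility with morphisms via naturality of $\nabla$ and $\sigma$. The only point worth making explicit is that the leaves of $\mathcal{L}_0 T^\omega\sigma_X(\varphi)$ land in $T^\omega VP\mathbb{X}$ (not just $T^\omega QX$) because clopens are closed under Boolean operations, which is what lets you feed them to the restricted transformation $\nabla : T^\omega VP \rightarrow QTU$; your sketch already handles this correctly.
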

Since $\mathbf{BA}$ is a variety it is co-complete (a survey of properties of categories of algebras can be found in \cite{adamekrosicky}), so the initial sequence for the functor $\tilde{T}$ is well defined.  It is proved in \cite{KKV12} that the functor $\tilde{T}$ is finitary, hence the initial sequence for $\tilde{T}$ stabilizes at $\omega$. Since $\widehat{T}$ is dual to $\tilde{T}$, we conclude:
\begin{lemma}
\label{terminalsequence}
The terminal sequence for $\widehat{T}$ stabilizes at $\omega$. 
\end{lemma}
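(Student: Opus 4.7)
The plan is to transfer the question across Stone duality and appeal to the finitarity of $\tilde{T}$ established in \cite{KKV12}. By the preceding Proposition, $\widehat{T} \cong S \circ \tilde{T} \circ P$, so the terminal sequence for $\widehat{T}$ in $\mathbf{Stone}$ should be, up to isomorphism, the image under $S$ of the initial sequence for $\tilde{T}$ in $\mathbf{BA}$. Since $\tilde{T}$ is finitary and $\mathbf{BA}$ is cocomplete, the initial sequence for $\tilde{T}$ stabilizes at $\omega$, and the aim is to pull this conclusion back to $\widehat{T}$.

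To set the stage, recall that $S$ and $P$ form an equivalence $\mathbf{Stone} \simeq \mathbf{BA}^{op}$ and hence preserve all existing limits and colimits. Translating between $\mathbf{BA}^{op}$ and $\mathbf{BA}$, this means that $P$ sends the terminal object of $\mathbf{Stone}$ to the initial Boolean algebra, and carries cofiltered limits in $\mathbf{Stone}$ to filtered colimits in $\mathbf{BA}$.

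Using this, I would construct by transfinite induction a family of isomorphisms $\phi_\xi : \mathcal{A}_\xi \to P\mathcal{Z}_\xi$ natural with respect to the connecting maps, where $(\mathcal{A}_\xi, h^\zeta_\xi)$ is the initial sequence for $\tilde{T}$ and $(\mathcal{Z}_\xi, g^\zeta_\xi)$ is the terminal sequence for $\widehat{T}$. The base case is immediate since both $\mathcal{A}_0$ and $P\mathcal{Z}_0$ are the initial Boolean algebra. At a successor stage I set
\[
\phi_{\xi+1} : \mathcal{A}_{\xi+1} = \tilde{T}\mathcal{A}_\xi \;\xrightarrow{\tilde{T}\phi_\xi}\; \tilde{T} P \mathcal{Z}_\xi \;\cong\; P S \tilde{T} P \mathcal{Z}_\xi \;\cong\; P\widehat{T}\mathcal{Z}_\xi \;=\; P\mathcal{Z}_{\xi+1},
\]
using the equivalence isomorphism $PS \cong \mathrm{id}$ together with $\widehat{T} \cong S\tilde{T}P$. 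At a limit ordinal $\xi$, since $P$ converts the cofiltered limit defining $\mathcal{Z}_\xi$ into a filtered colimit, $P\mathcal{Z}_\xi$ carries a canonical colimit cocone for the diagram $(P\mathcal{Z}_\eta)_{\eta<\xi}$; composing with the previously constructed $\phi_\eta^{-1}$ turns this into a colimit cocone for $(\mathcal{A}_\eta)_{\eta<\xi}$, and the unique mediating map out of $\mathcal{A}_\xi$ is then an isomorphism $\phi_\xi$.

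Once the isomorphism of sequences is in place, finitarity of $\tilde{T}$ gives that $h_\omega^{\omega+1} : \mathcal{A}_\omega \to \mathcal{A}_{\omega+1}$ is an isomorphism; transporting along $\phi_\omega$ and $\phi_{\omega+1}$ and applying $S$ shows that $g_\omega^{\omega+1} : \mathcal{Z}_{\omega+1} \to \mathcal{Z}_\omega$ is an isomorphism too, which is exactly the stabilization we want. I expect the main fiddly point to be the bookkeeping at the limit stage: verifying that the colimit cocone obtained by pushing the terminal sequence through $P$ genuinely agrees, after insertion of the inductively constructed $\phi_\eta$'s, with the canonical cocone defining $\mathcal{A}_\omega$. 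This is a straightforward diagram chase but it is the one place where the direction-reversal introduced by the duality has to be tracked carefully.
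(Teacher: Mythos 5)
Your proposal is correct and follows exactly the route the paper takes: the paper simply notes that $\tilde{T}$ is finitary (so its initial sequence in the cocomplete category $\mathbf{BA}$ stabilizes at $\omega$) and then concludes "since $\widehat{T}$ is dual to $\tilde{T}$," leaving the duality transfer implicit. Your transfinite construction of the isomorphisms $\phi_\xi : \mathcal{A}_\xi \to P\mathcal{Z}_\xi$ is precisely the bookkeeping the paper omits, and it is carried out correctly.
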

So the final coalgebra for $\widehat{T}$ always exists, and is given by the $\omega$-th entry of the terminal sequence. 

\section{Bisimulations for $\widehat{T}$-coalgebras}

\subsection{Neighbourhood bisimulations}

We shall now provide a sound and complete notion of bisimulation for $\widehat{T}$-coalgebras. Bisimulations for the Vietoris functor have been studied in \cite{BeFoVe10}. Every  coalgebra $(\mathbb{X},\alpha)$ for $\mathcal{V}$ comes with an underlying powerset coalgebra, given by $(X,U\alpha)$. Such a coalgebra is just a Kripke frame, and a Kripke bisimulation between powerset coalgebras $(X,\alpha)$ and $(Y,\beta)$ is defined to be a $\overline{\psf}$-bisimulation, where $\overline{\psf}$ is the Barr extension of $\psf$. A \textit{Vietoris bisimulation} between two given Vietoris coalgebras $(\mathbb{X},\alpha)$ and $(\mathbb{Y},\beta)$ is then defined to be a Kripke bisimulation between the underlying $\psf$-coalgebras, which in addition is topologically closed as a subspace of the product $\mathbb{X} \times \mathbb{Y}$. The main technical result in \cite{BeFoVe10} is that the \textit{topological closure of a Kripke bisimulation is always a Vietoris bisimulation}. As we shall see, this result generalizes to arbitrary set functors. 

First, we introduce two auxiliary relation liftings for the functor $P$:
\begin{definition}
Given a pair of Stone spaces $\mathbb{X}$ and $\mathbb{Y}$, and a binary relation $R \subseteq X \times Y$, we define the relation $\overrightarrow{R} \subseteq VP(\mathbb{X}) \times VP(\mathbb{Y})$ by
$$ A \overrightarrow{R} B \text{ iff } R[A] \subseteq B$$
Conversely, we define $\overleftarrow{R}$ by
$$ A \overleftarrow{R} B \text{ iff } R^\dagger[B] \subseteq A$$
\end{definition}
Here, $R[A] = \{v \mid \exists u \in A: u R v\}$ and $R^\dagger[B] = \{v \mid \exists u \in B: v R u\}$. Note that the operation $\rightarrow$ is \textit{antitone}: if $R \subseteq S$ then  $\overrightarrow{S} \subseteq \overrightarrow{R}$. The same holds for $\leftarrow$.


We can now introduce our notion of bisimulations for $\widehat{T}$-coalgebras (where $L$ is a given lax extension for $T$):
\begin{definition}
Let $(\mathbb{X},\alpha)$ and $(\mathbb{Y}, \beta)$ be two $\widehat{T}$-coalgebras, and $R \subseteq X \times Y$ a binary relation. Then $R$ is said to be a \textit{neighbourhood bisimulation} if, whenever $u R v$, we have for all $\varphi \in T^\omega P\mathbb{X}$ and all $\psi \in T^\omega P\mathbb{X}$: 
\begin{enumerate}
\item
If $\varphi L(\overrightarrow{R}) \psi$ then $\nabla_\mathbb{X}\varphi \in \alpha(u)$ implies $\nabla_\mathbb{Y}\psi \in \beta(v)$
\item If $\varphi L(\overleftarrow{R}) \psi$ then $\nabla_\mathbb{Y}\psi \in \beta(v)$ implies $\nabla_\mathbb{X}\varphi \in \alpha(u)$
\end{enumerate}
\end{definition}
The exact sense in which neighbourhood bisimulations generalize Vietoris bisimulations will be explained in the next subsection. Before that, we shall prove that neighbourhood bisimulations are always sound and complete for behavioural equivalence. 

From now on, given a space $\mathbb{X}$, let $\in_\mathbb{X} \subseteq X \times P \mathbb{X}$ denote the membership relation between elements of $X$ and clopens, and let $\subseteq_\mathbb{X}$ denote the subsethood relation between clopens of $\mathbb{X}$.
The easy proof of the following lemma is left to the reader:

\begin{lemma}
\label{monotone}
Let $\varphi$ and $\psi$ be members of $TP\mathbb{X}$ such that $\varphi (L \subseteq_\mathbb{X}) \psi$. Then $\nabla_\mathbb{X}\varphi \subseteq \nabla_\mathbb{Y}\psi$.
\end{lemma}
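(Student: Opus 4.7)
The plan is a short diagram chase using only the lax-extension axioms \textbf{L1} and \textbf{L2} plus one elementary relational inclusion. I shall read the conclusion as $\nabla_\mathbb{X}\varphi \subseteq \nabla_\mathbb{X}\psi$ (the $\mathbb{Y}$ in the statement appears to be a typo, since both $\varphi$ and $\psi$ live in $TP\mathbb{X}$ and the relation is $\subseteq_\mathbb{X}$).

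First I would unfold the definition of $\nabla_\mathbb{X}$: an element $\alpha \in TX$ belongs to $\nabla_\mathbb{X}\varphi$ exactly when $\alpha L({\in_\mathbb{X}}) \varphi$, where $\in_\mathbb{X} \subseteq X \times VP\mathbb{X}$ is the membership relation between points and clopens (this is the ``abuse of notation'' version of $\nabla_X \circ T^\omega \iota_\mathbb{X}$ described before Lemma~\ref{terminalsequence}). So pick an arbitrary $\alpha \in \nabla_\mathbb{X}\varphi$; together with the hypothesis $\varphi L({\subseteq_\mathbb{X}}) \psi$, axiom \textbf{L2} immediately yields
$$ \alpha\, L({\in_\mathbb{X}} \,;\, {\subseteq_\mathbb{X}})\, \psi. $$

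Second, I would record the trivial relational inclusion ${\in_\mathbb{X}} \,;\, {\subseteq_\mathbb{X}} \;\subseteq\; {\in_\mathbb{X}}$: if $u \in A$ and $A \subseteq B$ for clopens $A, B$, then $u \in B$. By monotonicity (axiom \textbf{L1}) this inclusion is preserved by $L$, hence $\alpha L({\in_\mathbb{X}})\psi$, i.e.\ $\alpha \in \nabla_\mathbb{X}\psi$. Since $\alpha$ was arbitrary, this proves $\nabla_\mathbb{X}\varphi \subseteq \nabla_\mathbb{X}\psi$.

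There is essentially no obstacle; the lemma is a direct consequence of the axiomatic shape of a lax extension. The only mildly subtle point is bookkeeping about whether the membership relation is viewed in $X \times QX$ or $X \times VP\mathbb{X}$, but since $\nabla_\mathbb{X}$ is defined precisely by precomposing with the inclusion $\iota_\mathbb{X}: VP\mathbb{X} \to QX$, both readings produce the same set and the chain $\alpha L({\in}) \varphi L({\subseteq}) \psi \;\Rightarrow\; \alpha L({\in}) \psi$ goes through verbatim.
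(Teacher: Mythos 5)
Your proof is correct and follows essentially the same route as the paper's: compose the chain $\alpha\,L({\in_\mathbb{X}})\,\varphi\,L({\subseteq_\mathbb{X}})\,\psi$ and use ${\in_\mathbb{X}};{\subseteq_\mathbb{X}} \subseteq {\in_\mathbb{X}}$ to collapse it (the paper phrases this via the Barr extension's equality $(\overline{T}{\in});(\overline{T}{\subseteq}) = \overline{T}({\in};{\subseteq})$, whereas you use the lax-extension axioms \textbf{L1} and \textbf{L2}, which is the slightly more careful formulation given that the lemma is stated for $L$). You are also right that the $\mathbb{Y}$ in the statement is a typo for $\mathbb{X}$.
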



\begin{theorem}
Two states $u,v$ in $\widehat{T}$-coalgebras $(\mathbb{X},\alpha)$ and $(\mathbb{Y},\beta)$ respectively are behaviourally equivalent iff they are related by some neighbourhood bisimulation.
\end{theorem}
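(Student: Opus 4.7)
The plan is to prove the two directions separately, using Lemma \ref{terminalsequence} to reduce behavioural equivalence to equality in the final $\widehat{T}$-coalgebra for soundness, and taking the kernel of a cospan as the candidate bisimulation for completeness.

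For soundness, assume $R$ is a neighbourhood bisimulation with $uRv$; I would prove by transfinite induction on $\xi$ that $beh_\xi^\alpha(u') = beh_\xi^\beta(v')$ for every $(u',v')\in R$, so that at $\xi=\omega$ the pair $(u,v)$ is sent to the same point of the final coalgebra. The base and limit stages are immediate. The crux is the successor step. Since an element of $\widehat{T}\mathcal{Z}_\xi$ is an ultrafilter determined by its values on the generators $\nabla_{\mathcal{Z}_\xi}\chi$, naturality of $\nabla$ and the recurrence $beh_{\xi+1}^\alpha = \widehat{T}(beh_\xi^\alpha)\circ\alpha$ reduce the equality $beh_{\xi+1}^\alpha(u') = beh_{\xi+1}^\beta(v')$ to the biconditional
\[
\nabla_\mathbb{X}\varphi \in \alpha(u') \;\Longleftrightarrow\; \nabla_\mathbb{Y}\psi \in \beta(v'),
\]
where $\varphi := T^\omega QU(beh_\xi^\alpha)(\chi)$ and $\psi := T^\omega QU(beh_\xi^\beta)(\chi)$. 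To invoke the bisim conditions of $R$ it suffices to verify $\varphi L(\overrightarrow{R})\psi$ and $\varphi L(\overleftarrow{R})\psi$. By L3 we have $(\chi,\varphi) \in L(QU(beh_\xi^\alpha))$ and $(\chi,\psi) \in L(QU(beh_\xi^\beta))$; symmetry of $L$ reverses the first and L2 composes, yielding $(\varphi,\psi) \in L((QU(beh_\xi^\alpha))^\dagger;\, QU(beh_\xi^\beta))$. The inductive hypothesis gives $(QU(beh_\xi^\alpha))^\dagger;\, QU(beh_\xi^\beta) \subseteq \overrightarrow{R} \cap \overleftarrow{R}$, and L1 concludes.

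For completeness, pick a cospan $h: (\mathbb{X},\alpha) \to (\mathbb{Z},\gamma)$, $k: (\mathbb{Y},\beta) \to (\mathbb{Z},\gamma)$ of coalgebra morphisms with $h(u) = k(v)$, and take $R := \{(u',v') : h(u') = k(v')\}$. As a preparatory lemma, I would show that the graph $R_f$ of any coalgebra morphism $f$ is itself a neighbourhood bisim: from $\varphi L(\overrightarrow{R_f})\psi$ and $(\psi, T^\omega QUf(\psi)) \in L(QUf)$ (by L3), L2 gives $(\varphi, T^\omega QUf(\psi)) \in L(\overrightarrow{R_f};\, QUf)$; since a direct check yields $\overrightarrow{R_f};\, QUf \subseteq \subseteq_\mathbb{X}$, L1 and Lemma \ref{monotone} give $\nabla_\mathbb{X}\varphi \subseteq \nabla_\mathbb{X}(T^\omega QUf(\psi)) = (TUf)^{-1}(\nabla_\mathbb{Z}\psi)$ by naturality, and upward closure of the ultrafilter $\alpha(u')$ together with the coalgebra morphism equation $\widehat{T}f(\alpha(u')) = \gamma(f(u'))$ finishes the forward condition; the reverse is symmetric. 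For $R$ itself, given $\varphi L(\overrightarrow{R})\psi$ with $\nabla_\mathbb{X}\varphi \in \alpha(u')$, the plan is to construct an interpolant $\chi \in T^\omega VP\mathbb{Z}$ such that $\varphi L(\subseteq_\mathbb{X})(T^\omega QUh(\chi))$ and $(T^\omega QUk(\chi))L(\subseteq_\mathbb{Y})\psi$. Lemma \ref{monotone} then gives $\nabla_\mathbb{X}\varphi \subseteq \nabla_\mathbb{X}(T^\omega QUh(\chi))$ and $\nabla_\mathbb{Y}(T^\omega QUk(\chi)) \subseteq \nabla_\mathbb{Y}\psi$, and upward closure together with the equation $\widehat{T}h(\alpha(u')) = \gamma(h(u')) = \gamma(k(v')) = \widehat{T}k(\beta(v'))$ places $\nabla_\mathbb{Y}\psi$ in $\beta(v')$. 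Existence of the interpolant is the technical heart of this direction: it is immediate when $L$ is functorial on relations (e.g., the Barr extension for a weak-pullback-preserving $T$, for which $L(R_1;R_2) = LR_1;LR_2$ renders $\overrightarrow{R} = \overrightarrow{R_h};\overrightarrow{R_k^\dagger}$ decomposable inside $L$), and in general requires exploiting the inclusion $(QUh)^\dagger;\, QUk \subseteq \overrightarrow{R}$ together with L1--L3 and symmetry of $L$; this interpolation step is where I expect the main difficulty to lie.
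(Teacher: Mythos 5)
Your soundness argument is essentially the paper's: induction along the terminal sequence (which stabilizes at $\omega$ by Lemma \ref{terminalsequence}), reduction of the successor step via naturality of $\nabla$ to membership of the generators $\nabla_{\mathcal{Z}_\xi}\chi$ in the ultrafilters, and placement of the pair $(\varphi,\psi)$ in $L\overrightarrow{R}\cap L\overleftarrow{R}$ by factoring through a common clopen of $\mathcal{Z}_\xi$ using L1--L3, symmetry of $L$, and the inductive hypothesis $R\subseteq\ \sim_\xi$. That direction is complete and correct.

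The gap is exactly where you located it: the interpolant in the completeness direction. As you set it up, you would need to decompose $L\overrightarrow{R}$ as a composite inside $L$, and L2 only gives $LR_1;LR_2\subseteq L(R_1;R_2)$, which is the wrong direction for extracting an interpolant from $\varphi\,(L\overrightarrow{R})\,\psi$; so for a general lax extension your plan stalls. The paper sidesteps decomposition entirely by writing the interpolant down: $\chi:=T^\omega f'(\varphi)$, where $f'$ is the direct-image map $p\mapsto h[p]$. Both of your required conditions then follow from the lax inclusions in their stated directions. For $\varphi\,(L\subseteq_\mathbb{X})\,T^\omega(Ph\circ f')(\varphi)$: since $p\subseteq h^{-1}[h[p]]$ always, the graph of $Ph\circ f'$ is contained in $\subseteq_\mathbb{X}$, so L3 and L1 give $T(Ph\circ f')\subseteq L(\subseteq_\mathbb{X})$. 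For $T^\omega(Pk\circ f')(\varphi)\,(L\subseteq_\mathbb{Y})\,\psi$: precompose $\varphi\,(L\overrightarrow{R})\,\psi$ with $(T(Pk\circ f'))^\dagger\subseteq L((Pk\circ f')^\dagger)$ (L3 plus symmetry), apply L2, and use the relation-level inclusion $(Pk\circ f')^\dagger;\overrightarrow{R}\ \subseteq\ \subseteq_\mathbb{Y}$ --- this last inclusion is where the pullback property of $R$ enters: if $y\in k^{-1}[h[p]]$ then $k(y)=h(x)$ for some $x\in p$, hence $x\,R\,y$ and $y\in R[p]$. With this $\chi$ in hand, the remainder of your argument (Lemma \ref{monotone}, upward closure of the ultrafilters, and the morphism equations $\widehat{T}h(\alpha(u'))=\gamma(h(u'))=\gamma(k(v'))=\widehat{T}k(\beta(v'))$) goes through as you describe and coincides with the paper's proof. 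So the missing step is not an interpolation property of $L$ and does not require functoriality of the lifting; it is a concrete pushforward construction available for any $T$ with a symmetric lax extension.
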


\begin{proof}
Let $R$ be any neighbourhood bisimulation. Since $\mathbf{Stone}$ is dually equivalent to the algebraic variety $\mathbf{BA}$, it has all co-filtered limits, so the terminal sequence for $\widehat{T}$ can be constructed, and since $\widehat{T}$ preserves co-filtered limits it stabilizes at $\omega$.
We prove by induction that $R \subseteq \sim_\xi$ for each finite ordinal $\xi < \omega$, where $u \sim_\xi v$ iff $beh^\alpha_\xi (u) = beh^\beta_\xi(v)$. The case for $\xi = 0$ is trivial, so we treat the case for $\xi + 1$ given that the induction hypothesis holds for $\xi$.

 Now, suppose $u R v$. We want to show that $u \sim_{\xi + 1} v$, i.e. $beh^\alpha_{\xi + 1}(u) = beh^\beta_{\xi + 1}(v)$. Equivalently, we need to show that 
$$\widehat{T}(beh^\alpha_\xi)(\alpha(u)) = \widehat{T}(beh^\beta_\xi)(\beta(v))$$
For this, it suffices to prove that, for every $\theta \in T^\omega P(\mathcal{Z}_\xi)$, we have 
$$T^\omega P(beh^\alpha_\xi)(\theta) \in \alpha(u) \text{ iff } T^\omega P(beh^\beta_\xi)(\theta) \in \beta(v)$$
Since $R$ is a neighbourhood bisimulation, it suffices to show that, for every $\theta \in T^\omega P(\mathcal{Z}_\xi)$, we have
$$( T^\omega P(beh^\alpha_\xi)(\theta),T^\omega P(beh^\beta_\xi)(\theta) ) \in L\overrightarrow{R} \cap L\overleftarrow{R}$$
We only prove the statement for  $L\overrightarrow{R}$ since the proof of the second statement is symmetric.
We have  $R \subseteq \sim_\xi$ by the inductive hypothesis, hence $\overrightarrow{\sim_\xi} \subseteq \overrightarrow{R}$, and since the relation lifting $L$ is monotone, it suffices to show that
$$( T^\omega P(beh^\alpha_\xi)(\theta),T^\omega P(beh^\beta_\xi)(\theta) ) \in L(\overrightarrow{\sim_\xi})$$
for each $\theta$. Since $L$ is a symmetric lax extension for $T$, it is easy to show that
 $$(T^\omega P(beh^\alpha_\xi)(\theta),T^\omega P(beh^\beta_\xi)(\theta)) \in L(P(beh^\alpha_\xi);(P(beh^\beta_\xi))^\dagger)$$
where $;$ denotes relation composition, $\dagger$ gives the converse of a relation and we identify a mapping with its graph. So by monotonicity of $L$, it now suffices to prove that 
$$P(beh^\alpha_\xi);(P(beh^\beta_\xi))^\dagger \; \subseteq \; \overrightarrow{\sim_\xi}$$
So let $p$ be a clopen in $\mathbb{X}$ and $q$ a clopen in $\mathbb{Y}$ with 
$$( p,q ) \in P(beh^\alpha_\xi);(P(beh^\beta_\xi))^\dagger $$
This means that there is a clopen set $c$ in $\mathcal{Z}_\xi$ with $(beh^\alpha_\xi)^{-1}[c] = p$ and $(beh^\beta_\xi)^{-1}[c] = q$. Hence, we get

\begin{displaymath}
\begin{array}{lcl}
\sim_\xi[p] & = & (beh^\beta_\xi)^{-1}[(beh^\alpha_\xi)[p]] \\
& =  & (beh^\beta_\xi)^{-1}[(beh^\alpha_\xi)[(beh^\alpha_\xi)^{-1}[c]]]\\
& \subseteq &  (beh^\beta_\xi)^{-1}[c] \\
& = & q
\end{array} \end{displaymath}
This ends the proof of soundness for behavioural equivalence.

Conversely, let $f : (\mathbb{X},\alpha) \rightarrow (\mathbb{Z},\gamma)$ and $g : (\mathbb{Y},\beta) \rightarrow (\mathbb{Z},\gamma)$ be a co-span of $\widehat{T}$-coalgebra morphisms. We show that the pullback $R$ of $f$ and $g$ (in $\mathbf{Set}$) is a neighbourhood bisimulation.  So, suppose $u R v$, and let $\varphi \in T^\omega P\mathbb{X}$ and $\psi \in T^\omega P\mathbb{Y}$ be such that $\varphi (L \overrightarrow{R}) \psi$. Suppose that $\nabla_\mathbb{X}\varphi \in \alpha(u)$; we show that $\nabla_\mathbb{Y}\psi \in \beta(v)$.  First, note that we have $Tf[\nabla_\mathbb{X}\varphi] \in \gamma(f(u))$, because
$$Tf[\nabla_\mathbb{X}\varphi] \in \gamma(f(v)) \; \Leftrightarrow \; QTf(Tf[\nabla_\mathbb{X} \varphi]) \in \alpha(u) $$
and we have
$$\nabla_\mathbb{X} \varphi \subseteq QTf(Tf[\nabla_\mathbb{X} \varphi]) $$
and $\alpha(u)$ is a filter, hence closed under supersets. If follows that $\nabla_\mathbb{Z}Tf'(\varphi) \in \gamma(f(u))$, where $f' : P \mathbb{X} \rightarrow P\mathbb{Y}$ is the map defined by the assignment $p \mapsto f[p]$. To see this, since $\gamma(f(v))$ is a filter it suffices to prove that
$$Tf[\nabla_\mathbb{X}\varphi]  \subseteq \nabla_\mathbb{Z}Tf'(\varphi) $$
For this, we need to show that for all $a \in \nabla_\mathbb{X} \varphi$, we have $Tf(a) \in \nabla_\mathbb{Z}Tf'(\varphi)$. Equivalently, we need to show that $a (L\in_\mathbb{X}) \varphi$ implies $Tf(a)(L\in_\mathbb{Z}) Tf'\varphi$. Now, $a (L \in_\mathbb{X}) \varphi $ implies
$$ Tf(a) (Lf)^\dagger;(L\in_\mathbb{X});(Lf') Tf'(\varphi)$$
But we have
\begin{displaymath}
\begin{array}{lcl}
(Lf)^\dagger;(L\in_\mathbb{X});(Lf')  & = & (Lf^\dagger);(L\in_\mathbb{X});(Lf') \\
& = & L(f^\dagger;\in_\mathbb{X};f')\\
& \subseteq & L(\in_\mathbb{Z})
\end{array}
\end{displaymath}
Where the inclusion follows from the fact that
$$f^\dagger;\in_\mathbb{X};f' \; \subseteq\;  \in_\mathbb{Y}$$
So $\nabla_\mathbb{Z}Tf'(\varphi) \in \gamma(f(u))$, and it follows that 
$$QTg(\nabla_\mathbb{Z}Tf'(\varphi) \in \gamma(f(u))) = \nabla_\mathbb{Y}(T(Pg\circ f')\varphi) \in \beta(v) $$
We want to show that
$$ \nabla_\mathbb{Y}(T(Pg\circ f')\varphi)   \; (L\subseteq_\mathbb{Y}) \;\nabla_\mathbb{Y} \psi $$
so that we can apply Lemma \ref{monotone} to conclude that $\nabla_\mathbb{Y} \psi \in \beta(v)$. Since we know that $\varphi (L\overrightarrow{R}) \psi$, it suffices to prove that
$$(T(Pg \circ f'))^\dagger ; L\overrightarrow{R} \; \subseteq \; L\subseteq_\mathbb{Y}$$
Since $T(Pg \circ f'))^\dagger \subseteq L((Pg \circ f')^\dagger)$, it suffices in turn to show that
$$(Pg \circ f')^\dagger ; \overrightarrow{R} \; \subseteq \; \subseteq_\mathbb{Y}$$
But this is easily derived from the definition of $R$ as the pullback of $f$ and $g$.  

For the other direction, we can prove by a symmetric argument that if $\varphi \in T^\omega P\mathbb{X}$,  $\psi \in T^\omega P\mathbb{Y}$ and $\varphi (L\overleftarrow{R}) \psi$, then $\nabla_\mathbb{Y}\psi \in \beta(v)$ implies $\nabla_\mathbb{X} \varphi \in \alpha(u)$. 
\end{proof}

\subsection{Relating neighbourhood bisimulations to Vietoris Bisimulations}

We now show neighbourhood bisimulations for $\widehat{\psf}$ relate to Vietoris bisimulations. As one would hope,  Vietoris bisimulations can be recovered as closed neighbourhood bisimulations. For an exposition of the basic facts about nets in topology that we use in the proof, see \cite{bredon}.

Generally, given a $T$-coalgebra $(X,\alpha)$ and a Stone space $\mathbb{X} = (X,\tau)$, let $(\mathbb{X},\alpha_\mathbb{X})$ be a $\widehat{T}$-coalgebra with the property that
$$\alpha_\mathbb{X}(u) = \{\varphi \in \langle Im(\nabla_\mathbb{X}) \rangle \mid \alpha(u) \in \varphi\}$$
We call such a $\widehat{T}$-coalgebra a $\widehat{T}$-\textit{companion} to $(X,\alpha)$.
Using this terminology, the following proposition relates neighbourhood bisimulations to $L$-bisimulations:
\begin{proposition}
\label{sufficient}
Suppose $R$ is an $L$-bisimulation between $(X,\alpha)$ and $(Y,\beta)$, and suppose $(\mathbb{X},\alpha_\mathbb{X})$ and $(\mathbb{Y},\beta_\mathbb{Y})$ are $\widehat{T}$-companions to $(X,\alpha)$ and $(Y,\beta)$ respectively. Then $R$ is a neighbourhood bisimulation between $(\mathbb{X},\alpha_\mathbb{X})$ and $(\mathbb{Y},\beta_\mathbb{Y})$.
\end{proposition}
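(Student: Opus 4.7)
The plan is to translate both clauses of the neighbourhood-bisimulation condition into statements purely about the lax extension $L$, the bisimulation $R$, and the membership relations $\in_\mathbb{X}, \in_\mathbb{Y}$, and then to close the two resulting inclusions by short relational chases.

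First, unfolding the definition of $\widehat{T}$-companion together with the definition of $\nabla_\mathbb{X}\varphi = \{a \in TX \mid a(L\in_\mathbb{X})\varphi\}$, we have
\[\nabla_\mathbb{X}\varphi \in \alpha_\mathbb{X}(u) \iff \alpha(u) \in \nabla_\mathbb{X}\varphi \iff \alpha(u)(L\in_\mathbb{X})\varphi,\]
and symmetrically on the $\mathbb{Y}$-side. Hence the first clause of neighbourhood bisimulation reduces to showing that whenever $u R v$, $\varphi\, L(\overrightarrow{R})\,\psi$, and $\alpha(u)(L\in_\mathbb{X})\varphi$, we have $\beta(v)(L\in_\mathbb{Y})\psi$.

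Next, I would use the $L$-bisimulation hypothesis to get $(\alpha(u),\beta(v)) \in LR$, hence by symmetry of $L$ also $(\beta(v),\alpha(u)) \in L(R^\dagger)$. Composing this with the two given facts and applying L2 twice yields
\[(\beta(v),\psi) \in L(R^\dagger);L(\in_\mathbb{X});L(\overrightarrow{R}) \subseteq L(R^\dagger;\in_\mathbb{X};\overrightarrow{R}).\]
By L1 it then suffices to establish the purely relational inclusion $R^\dagger;\in_\mathbb{X};\overrightarrow{R} \subseteq \in_\mathbb{Y}$, which is a three-line chase: if $yR^\dagger x$, $x \in p$, and $R[p] \subseteq q$, then $y \in R[p] \subseteq q$, so $y \in_\mathbb{Y} q$.

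The second clause is entirely symmetric. Starting from $(\alpha(u),\beta(v)) \in LR$, $\beta(v)(L\in_\mathbb{Y})\psi$, and $(\psi,\varphi) \in L((\overleftarrow{R})^\dagger)$ (using symmetry of $L$), one composes to obtain $(\alpha(u),\varphi) \in L(R;\in_\mathbb{Y};(\overleftarrow{R})^\dagger)$, and the required inclusion $R;\in_\mathbb{Y};(\overleftarrow{R})^\dagger \subseteq \in_\mathbb{X}$ is proved by the same kind of chase directly from the definition $A\overleftarrow{R}B \iff R^\dagger[B] \subseteq A$. I do not foresee any real obstacle here: the proof is a mechanical sequence of applications of L1, L2 and symmetry of $L$, together with the two short relational inclusions. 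The only minor care needed is to track the transposition between $\overleftarrow{R}$ and $(\overleftarrow{R})^\dagger$ when lining up the compositions inside $L$.
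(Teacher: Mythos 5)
Your proposal is correct and follows essentially the same route as the paper's own proof: unfold the companion/$\nabla$ definitions to reduce both clauses to statements about $L$, compose $L(R^\dagger)$, $L(\in_\mathbb{X})$ and $L(\overrightarrow{R})$ via symmetry, L2 and L1, and finish with the relational inclusion $R^\dagger;\in_\mathbb{X};\overrightarrow{R}\subseteq\,\in_\mathbb{Y}$ (and its mirror image). You in fact supply slightly more detail than the paper, which leaves the final relational chase and the second clause to the reader.
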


Given a set $X$ and a subset $Z \subseteq X$, we write $\Diamond Z$ as a short-hand for $\nabla_\mathbb{X}(\{Z,X\})$, which is a subset of $\psf X $. We have, for $\alpha \in \psf X$, $\alpha \in \Diamond Z$ iff $\alpha \cap Z \neq \emptyset$. Dually, define $\Box Z = \nabla_X (\{Z\}) \cup \nabla_X(\emptyset)$. (Essentially, these are just the usual box- and diamond modalities.) The following lemma is an exercise in basic topology:

\begin{lemma}
Let $\mathbb{X}$ and $\mathbb{Y}$ be Stone spaces, and let $R$ be a closed subset of the product space. Then for every closed set $Z$ in $\mathbb{X}$, the image $R[Z] $ is closed.
\end{lemma}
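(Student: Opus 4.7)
The plan is to use the characterization of closedness via nets that the authors have referenced. The underlying general-topology fact here is that the projection $\pi_\mathbb{Y} : \mathbb{X} \times \mathbb{Y} \to \mathbb{Y}$ is a closed map whenever $\mathbb{X}$ is compact, applied to the closed set $R \cap (Z \times Y)$, whose image under $\pi_\mathbb{Y}$ is exactly $R[Z]$. I will spell this out directly with nets, since that is the language flagged by the authors.

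First I would fix a net $(y_i)_{i \in I}$ in $R[Z]$ converging to some $y \in \mathbb{Y}$, and for each $i$ pick a witness $x_i \in Z$ with $(x_i, y_i) \in R$. Since $\mathbb{X}$ is a Stone space it is compact, and $Z$, being closed in $\mathbb{X}$, is compact as well. Therefore the net $(x_i)_{i \in I}$ has a convergent subnet $(x_{i_j})_{j \in J}$, say with limit $x \in Z$. Because subnets of a convergent net share its limit, the matching subnet $(y_{i_j})_{j \in J}$ still converges to $y$, so in the product topology $(x_{i_j}, y_{i_j}) \to (x, y)$. Every term of this subnet lies in $R$, and $R$ is closed in $\mathbb{X} \times \mathbb{Y}$, so the limit $(x, y)$ is in $R$. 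Together with $x \in Z$ this yields $y \in R[Z]$, as desired.

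There is no real obstacle here: the only care that needs to be taken is the standard subnet bookkeeping (passing from a subnet of $(x_i)$ to the parallel subnet of $(y_i)$), but this is entirely mechanical. The only essential use of the Stone hypothesis is that it supplies the compactness needed to extract a convergent subnet from $(x_i)$, and the argument would in fact go through for any compact Hausdorff $\mathbb{X}$ and arbitrary topological $\mathbb{Y}$.
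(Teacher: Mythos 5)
Your argument is correct, and it is the standard one: intersect $R$ with $Z\times Y$ and use compactness of $Z$ to extract a convergent subnet of the witnesses, so that closedness of $R$ forces the limit pair into $R$. The paper gives no proof of this lemma at all (it is explicitly left as ``an exercise in basic topology''), so there is nothing to compare against; your write-up is exactly the intended argument, and your closing remark is right that only compactness of $\mathbb{X}$ (not even Hausdorffness) is really used.
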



The relationship between neighbourhood bisimulations and Vietoris bisimulations is now established in the following result:
\begin{theorem}
\label{comparison}
Let $(\mathbb{X},\alpha)$ and $(\mathbb{Y},\beta)$ be a pair of Vietoris coalgebras, let $R$ be any closed relation in $\mathbb{X} \times \mathbb{Y}$ and let $(\mathbb{X},\alpha_\mathbb{X})$ and $(\mathbb{Y},\beta_\mathbb{Y})$ be $\widehat{\psf}$-companions to the coalgebras $(X,U\alpha)$ and $(Y,U\beta)$ respectively. Then  $R$ is a Vietoris bisimulation between $(\mathbb{X},\alpha)$ and $(\mathbb{Y},\beta)$ iff it is a neighbourhood bisimulation between $(\mathbb{X},\alpha_\mathbb{X})$ and $(\mathbb{Y},\beta_\mathbb{Y})$.
\end{theorem}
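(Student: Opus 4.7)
The plan is to handle the two directions separately. For the forward direction, I note that by definition a Vietoris bisimulation is a Kripke bisimulation that is topologically closed, and a Kripke bisimulation is by definition an $\overline{\psf}$-bisimulation, which is an $L$-bisimulation for our fixed choice $L = \overline{\psf}$. Since $(\mathbb{X}, \alpha_\mathbb{X})$ and $(\mathbb{Y}, \beta_\mathbb{Y})$ are $\widehat{\psf}$-companions to $(X, U\alpha)$ and $(Y, U\beta)$ respectively, Proposition~\ref{sufficient} immediately yields that $R$ is a neighbourhood bisimulation between the companions.

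For the backward direction, $R$ is closed by hypothesis, so it suffices to verify that $R$ is a Kripke bisimulation on the underlying $\psf$-coalgebras, i.e., satisfies the forth and back conditions. I will argue the forth condition by contradiction. Suppose $u R v$ and there is $x_0 \in U\alpha(u)$ with no $R$-successor in $U\beta(v)$. Since $R^\dagger$ is a closed relation and $U\beta(v)$ is closed in $\mathbb{Y}$, applying the lemma to $R^\dagger$ shows that $R^{-1}[U\beta(v)]$ is closed in $\mathbb{X}$, so $x_0$ has a clopen neighbourhood $p$ with $R[p] \cap U\beta(v) = \emptyset$. By the lemma, $R[p]$ is closed, and Stone separation gives a clopen $q_1$ in $\mathbb{Y}$ with $R[p] \subseteq q_1$ and $q_1 \cap U\beta(v) = \emptyset$. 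Taking $\varphi = \{p, X\}$ and $\psi = \{q_1, Y\}$, a routine check shows that $\varphi \, \overline{\psf}(\overrightarrow{R}) \, \psi$ via the pairings $p \leftrightarrow q_1$ and $X \leftrightarrow Y$. Unfolding the $\nabla$-semantics for $\psf$, one has $U\alpha(u) \in \nabla_\mathbb{X}\varphi$ (witnessed by $x_0 \in U\alpha(u) \cap p$) while $U\beta(v) \notin \nabla_\mathbb{Y}\psi$ (since $U\beta(v) \cap q_1 = \emptyset$), contradicting the forth clause of the neighbourhood bisimulation.

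The back condition is handled symmetrically, using $\overleftarrow{R}$: given $y_0 \in U\beta(v)$ with no $R$-predecessor in $U\alpha(u)$, I first separate the closed set $R^\dagger[\{y_0\}]$ from $U\alpha(u)$ by a clopen $p_1 \supseteq R^\dagger[\{y_0\}]$ disjoint from $U\alpha(u)$, and then locate a clopen $q \ni y_0$ such that $R^\dagger[q] \subseteq p_1$, using that the set $\{y \in Y : R^\dagger[\{y\}] \subseteq p_1\} = Y \setminus R[X \setminus p_1]$ is open by the lemma. The pair $\varphi = \{p_1, X\}$, $\psi = \{q, Y\}$ then witnesses $\overline{\psf}(\overleftarrow{R})$ and contradicts the back clause.

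The main technical step, and the real crux of the argument, is the choice of formulas: pairing a local clopen (namely $p$ around $x_0$, or $q$ around $y_0$) with the ambient clopen ($X$ or $Y$) so that the $\nabla$-semantics for $\psf$ forces the local clopen to meet the relevant set on the ``good'' side (because of the distinguished point) while making it disjoint on the ``bad'' side (by Stone separation). Degenerate cases such as $U\beta(v) = \emptyset$ or $U\alpha(u) = \emptyset$ cause no trouble because the $\nabla$-clause then fails at the ambient clopen $Y$ or $X$ automatically.
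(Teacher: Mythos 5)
Your proof is correct, and your forward direction is the same as the paper's: reduce to Proposition \ref{sufficient} via the observation that a Vietoris bisimulation is in particular a Kripke ($\overline{\psf}$-) bisimulation. For the converse you take a genuinely different route. The paper first establishes a diamond-preservation claim ($\alpha(u) \in \Diamond Z$ implies $\beta(v) \in \Diamond R[Z]$) by a finite-intersection-property argument, and then, for a fixed $w \in \alpha(u)$, builds a net of $R$-related pairs indexed by the clopen neighbourhoods of $w$ and extracts a convergent subnet, using closedness of $R$ to place the limit in $R$ and closedness of $\beta(v)$ to place its second coordinate in $\beta(v)$. You instead refute the failure of the forth (resp.\ back) condition directly: assuming a successor-free point, you combine the closed-image lemma (applied to both $R$ and $R^\dagger$) with separation of disjoint closed sets by clopens in a Stone space to manufacture a one-step witness $\varphi = \{p, X\}$, $\psi = \{q_1, Y\}$ that violates the relevant clause of the neighbourhood bisimulation. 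This avoids nets and subnets entirely and is arguably more elementary; it also treats the back condition honestly and symmetrically via $\overleftarrow{R}$, where the paper leaves that direction to the reader with ``proved in the same manner.'' Both arguments ultimately rest on the same topological facts: images of closed sets under a closed relation between Stone spaces are closed, and disjoint closed sets in a Stone space are separated by a clopen set. The only detail worth writing out fully is the verification of $\{p,X\}\,\overline{\psf}(\overrightarrow{R})\,\{q_1,Y\}$, including the degenerate pairings when $p = X$ or $q_1 = Y$, but as you note these cause no difficulty.
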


\subsection{Topological closure of neighbourhood bisimulations}

We now prove the main technical result of the paper:

\begin{theorem}
\label{main}
Let $B$ be any neighbourhood bisimulation between $(\mathbb{X},\alpha)$ and $(\mathbb{Y},\beta)$. Then the topological closure of $B$ in the product space $\mathbb{X} \times \mathbb{Y}$ is a neighbourhood bisimulation too.
\end{theorem}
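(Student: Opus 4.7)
My plan is to show that each of the two defining conditions for neighbourhood bisimulation is cut out by clopen subsets of the state space, so that it propagates from $B$ to its closure $\overline{B}$ by a standard density argument. First I would fix $(u,v) \in \overline{B}$ and a pair $\varphi \in T^\omega P\mathbb{X}$, $\psi \in T^\omega P\mathbb{Y}$ with $\varphi \, L(\overrightarrow{\overline{B}})\, \psi$, and observe that the antitonicity of $\overrightarrow{(-)}$ (applied to $B \subseteq \overline{B}$) together with monotonicity of $L$ upgrades this to $\varphi \, L(\overrightarrow{B})\, \psi$. So the hypothesis of the bisimulation clause is automatically inherited, and the only real work is to propagate the semantic conclusion ``$\nabla_\mathbb{Y}\psi \in \beta(v)$'' from pairs in $B$ to the point $(u,v)$.

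The propagation step is where the Stone structure enters. By construction, $\widehat{T}\mathbb{X} = S(\ddot T \mathbb{X})$ has as a clopen subbasis the sets $\{z \mid a \in z\}$ for $a \in \ddot T \mathbb{X}$, and each $\nabla_\mathbb{X}\varphi$ is by definition a member of $\ddot T \mathbb{X}$. Thus
$$D := \alpha^{-1}[\{z \in \widehat T \mathbb{X} \mid \nabla_\mathbb{X}\varphi \in z\}]$$
is clopen in $\mathbb{X}$ because $\alpha$ is continuous as a morphism in $\mathbf{Stone}$; symmetrically, $C := \beta^{-1}[\{z \mid \nabla_\mathbb{Y}\psi \notin z\}]$ is clopen in $\mathbb{Y}$. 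Assuming $\nabla_\mathbb{X}\varphi \in \alpha(u)$ but, for contradiction, $\nabla_\mathbb{Y}\psi \notin \beta(v)$, places $(u,v)$ in the open set $D \times C$. Since $(u,v) \in \overline{B}$, this neighbourhood meets $B$ at some $(u',v')$. Then $\nabla_\mathbb{X}\varphi \in \alpha(u')$ (because $u' \in D$), so clause (1) of the bisimulation $B$ applied to $(u',v')$ and $(\varphi, \psi)$ forces $\nabla_\mathbb{Y}\psi \in \beta(v')$, contradicting $v' \in C$. The verification of clause (2) is entirely symmetric, using $\overleftarrow{(-)}$ in place of $\overrightarrow{(-)}$.

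I do not expect a serious obstacle here: the core insight is that the neighbourhood bisimulation conditions are evaluated against clopen predicates on the Stone space $\widehat T \mathbb{X}$, and so they are locally constant in a precise topological sense, which is exactly the kind of property preserved by taking closures. The one point that needs a small but deliberate check is directionality: antitonicity of $\overrightarrow{(-)}$ and $\overleftarrow{(-)}$ must be combined with monotonicity of $L$ so that the hypothesis transfers downwards from $\overline{B}$ to $B$ while the conclusion transfers upwards from $B$ to $\overline{B}$. Compared with the Vietoris-only case of \cite{BeFoVe10}, where the Vietoris box/diamond topology has to be manipulated directly, here the fact that $\nabla_\mathbb{X}\varphi$ already lives in $\ddot T \mathbb{X}$ by construction reduces everything to continuity of $\alpha,\beta$ together with Stone duality.
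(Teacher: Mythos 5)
Your proof is correct and follows essentially the same route as the paper: both arguments reduce to the observation that $\{z \mid \nabla_{\mathbb{X}}\varphi \in z\}$ and the complement of $\{z \mid \nabla_{\mathbb{Y}}\psi \in z\}$ are clopen, pull these back along the continuous structure maps $\alpha$ and $\beta$, and use antitonicity of $\overrightarrow{(-)}$ together with monotonicity of $L$ to transfer the hypothesis $\varphi\, L(\overrightarrow{\overline{B}})\, \psi$ down to $\varphi\, L(\overrightarrow{B})\, \psi$. The only cosmetic difference is that the paper implements the density step via a net in $B$ converging to $(u,v)$ and its eventual behaviour, whereas you invoke directly that every open neighbourhood of a point of $\overline{B}$ meets $B$ --- a mild streamlining of the same argument.
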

\begin{proof}
Let $\overline{B}$ denote the topological closure of $B$.
Suppose that there is $x \in X$, $y \in Y$ and $\varphi \in T^\omega P\mathbb{X}$, $\psi \in T^\omega P\mathbb{Y}$ such that
\begin{itemize}
\item $x \overline{B} y$
\item $\varphi (L\overrightarrow{\overline{B}}) \psi$, but
\item $\nabla_\mathbb{X} \varphi \in \alpha(x)$ but $\nabla_\mathbb{Y} \psi \notin \beta(y)$ (or vice versa, but the other case is treated symmetrically.)
\end{itemize} 
We aim to derive a contradiction, using the fact that $B$ was assumed to be a neighbourhood bisimulation. First, since $(x,y) \in \overline{B}$, there is a net $\mathcal{N} = (x_i,y_i)_{i \in I}$ in $B$ for some directed poset $I$, such that $(x,y)$ is a limit point of this net. Define $\mathcal{N}_1$ to be the net $(x_i)_{i_I}$ and define $\mathcal{N}_2$ to be the net $(y_i)_{i \in I}$. Then $\mathcal{N}_1$ has $x$ as a limit point, and similarly $\mathcal{N}_2$ converges to $y$ (since the projection maps are continuous). Before we derive our contradiction, we prove the following claims:
\begin{description}
\item[Claim 1:] There exists an index $h \in I$ such that, for all $k \geq h$, we have
 $$\nabla_\mathbb{X} \varphi \in \alpha(x_k)$$
\item[Claim 2:] There exists an index $m \in I$ such that, for all $k \geq m$, we have
$$\nabla_\mathbb{Y} \psi \notin \beta(y_m)$$
\end{description}
We begin by proving Claim 1. Consider the net $(\alpha(x_i))_{i \in I}$ in $\widehat{T}\mathbb{X}$. Since the map $\alpha : \mathbb{X} \rightarrow \widehat{T}\mathbb{X}$ is a morphism in the category of Stone spaces, it is a continuous map, and since $x$ is a limit point of the net $(x_i)_{i \in I}$ it follows that $\alpha(x)$ is a limit point of the net $(\alpha(x_i))_{i \in I}$. This means that, for every  neighbourhood $S$ of $\alpha(x)$ in $\widehat {T}\mathbb{X}$, there is some index $h \in I$ such that $\alpha(x_k) \in S$ for all $k \geq h$. But by definition of the topology on the space $\widehat{T}\mathbb{X}$, if we  set
$$S = \{F \in \widehat {T}\mathbb{X} \mid \nabla_\mathbb{X} \varphi \in F\}$$
then $S$ is a clopen basis set, and certainly $\alpha(x) \in S$. Hence, we find an appropriate index $h$. 

For Claim 2, we consider the net $(\beta(y_i))_{i \in I}$. This net converges to $\beta(y)$, so for every neighbourhood $S$ of $\beta(y)$ in $\widehat{T}\mathbb{Y}$ there is some index $m \in I$ with $\beta(y_k) \in S$ for all $k \geq m$. With this in mind, set
$$S = \{F \in \widehat{T}\mathbb{Y}\mid \nabla_\mathbb{Y} \psi \notin F\}$$
This set is equal to
$$ \{F \in \widehat{T}\mathbb{Y} \mid (TY)\setminus \nabla_\mathbb{Y}\psi \in F\}$$
and so it is a clopen basis set in $\widehat {T}\mathbb{Y}$. Since $\beta(y) \in S$, we find an appropriate index $m$ to witness Claim 2.

With Claims 1-2 in place, we can finish the proof: let $h$ and $m$ be as described in these claims. Since $I$ is a directed set, there exists a common upper bound $k$ of $h$ and $m$ in $I$. Thus, we have $\nabla_\mathbb{X} \varphi \in \alpha(x_k)$, but $\nabla_\mathbb{Y}\psi \notin \beta(y_k)$. Since $(x_k,y_k) \in B$ and $B$ was a $\nabla$-bisimulation, to derive a contradiction at this point we only need to show that
$$\varphi (L \overrightarrow{B}) \psi$$
But we have assumed that  $\varphi (L \overrightarrow{\overline{B}}) \psi$, and since $B \subseteq \overline{B}$ and the operation $\rightarrow$ is antitone we get $\overrightarrow{\overline{B}} \subseteq \overrightarrow{B}$. Hence we get
$$ L\overrightarrow{\overline{B}} \; \subseteq \; L \overrightarrow{B}$$
Hence $\varphi (L\overrightarrow{B}) \psi$ as required, and the proof is done.
\end{proof}

As an immediate corollary to this result, we can derive the main technical result in \cite{BeFoVe10}:

\begin{corollary}[\cite{BeFoVe10}, Theorem 5.5]
The topological closure of a Kripke bisimulation between Vietoris coalgebras is a Vietoris bisimulation.
\end{corollary}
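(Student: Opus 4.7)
The plan is to chain together Proposition \ref{sufficient}, Theorem \ref{main}, and Theorem \ref{comparison} so that the corollary drops out essentially for free; the only real work is to track carefully through the $\widehat{\psf}$-companion terminology.

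First I would unpack the hypothesis. Suppose $R \subseteq X \times Y$ is a Kripke bisimulation between two Vietoris coalgebras $(\mathbb{X},\alpha)$ and $(\mathbb{Y},\beta)$. By the definition recalled at the start of Section 4.1, this just means that $R$ is a $\overline{\psf}$-bisimulation between the underlying $\psf$-coalgebras $(X,U\alpha)$ and $(Y,U\beta)$. The Vietoris coalgebras $(\mathbb{X},\alpha)$ and $(\mathbb{Y},\beta)$ are themselves $\widehat{\psf}$-companions to these underlying $\psf$-coalgebras (since the Vietoris functor coincides with $\widehat{\psf}$ up to natural isomorphism, so every Vietoris coalgebra presents as such a companion). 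Hence Proposition \ref{sufficient} applies and yields that $R$ is a neighbourhood bisimulation between $(\mathbb{X},\alpha)$ and $(\mathbb{Y},\beta)$ regarded as $\widehat{\psf}$-coalgebras.

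Next I would apply Theorem \ref{main} to this neighbourhood bisimulation $R$: the topological closure $\overline{R}$ in the product space $\mathbb{X} \times \mathbb{Y}$ is again a neighbourhood bisimulation between the same pair of $\widehat{\psf}$-coalgebras.

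Finally, to transfer this back to the Vietoris world, I would invoke Theorem \ref{comparison}. The relation $\overline{R}$ is closed by construction, and $(\mathbb{X},\alpha)$ and $(\mathbb{Y},\beta)$ are $\widehat{\psf}$-companions to $(X,U\alpha)$ and $(Y,U\beta)$. Hence being a closed neighbourhood bisimulation between the companions is equivalent to being a Vietoris bisimulation between the original Vietoris coalgebras, which is exactly what we need. The only point that could hide a subtlety is the identification of the given Vietoris coalgebras as $\widehat{\psf}$-companions of their underlying Kripke frames; once that bookkeeping is settled, the corollary is a direct composition of the three prior results, with no further topological or coalgebraic work to perform.
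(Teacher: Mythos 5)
Your proposal is correct and follows exactly the same route as the paper's proof: identify the Vietoris coalgebras with the $\widehat{\psf}$-companions of their underlying Kripke frames, apply Proposition \ref{sufficient} to get a neighbourhood bisimulation, take the closure via Theorem \ref{main}, and translate back with Theorem \ref{comparison}. The bookkeeping point you flag about companions is the same one the paper dispatches with ``it is easy to see,'' so there is nothing to add.
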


\begin{proof}
Let $R$ be any Kripke bisimulation between a pair of Vietoris coalgebras $(\mathbb{X},\alpha)$ and $(\mathbb{Y},\beta)$. It is easy to see that $(X,U\alpha)$ and $(Y,U\beta)$ have unique $\widehat{\psf}$-companions $(\mathbb{X},\alpha_\mathbb{X})$ and $(\mathbb{Y},\beta_\mathbb{Y})$ respectively, and $R$ is a neighbourhood bisimulation between $(\mathbb{X},\alpha_\mathbb{X})$ and $(\mathbb{Y},\beta_\mathbb{Y})$ by Proposition \ref{sufficient}. So the closure $\overline{R}$ of $R$ is also a neighbourhood bisimulation by Theorem \ref{main}. But since $\overline{R}$ is certainly closed, this means that it is a Vietoris bisimulation between $(\mathbb{X},\alpha)$ and $(\mathbb{Y},\beta)$, by Theorem \ref{comparison}.
\end{proof}

We can also derive a generalization of another result from \cite{BeFoVe10}, showing that the Vietoris bisimulations between two $\mathcal{V}$-coalgebras always form a complete lattice:
\begin{corollary}[\cite{BeFoVe10}, Corollary 5.7]
Let $(\mathbb{X},\alpha)$ and $(\mathbb{Y},\beta)$ be $\widehat{T}$-coalgebras. Then the family of topologically closed neighbourhood bisimulations between these coalgebras forms a complete lattice ordered by inclusion.
\end{corollary}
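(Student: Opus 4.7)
The plan is to exhibit a bottom element and arbitrary joins in the poset $\mathcal{B}$ of topologically closed neighbourhood bisimulations between $(\mathbb{X},\alpha)$ and $(\mathbb{Y},\beta)$ ordered by inclusion. Once these exist, every subset of $\mathcal{B}$ has a supremum (with the empty subset handled by the bottom), and meets arise as suprema of the sets of lower bounds, making $\mathcal{B}$ a complete lattice. The bottom element is simply the empty relation, which is topologically closed and vacuously satisfies both clauses in the definition of a neighbourhood bisimulation.

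The main content lies in constructing joins, for which the key preparatory observation is that arbitrary \emph{unions} of neighbourhood bisimulations are neighbourhood bisimulations. Given a family $\{R_i\}_{i \in I}$ with union $R = \bigcup_i R_i$ and a pair $(u,v) \in R$ witnessed by some $R_{i_0}$, suppose $\varphi\, L(\overrightarrow{R})\, \psi$. Since $R_{i_0} \subseteq R$ and the operation $\rightarrow$ is antitone (as noted after its definition), we have $\overrightarrow{R} \subseteq \overrightarrow{R_{i_0}}$; then condition (L1) for $L$ gives $L(\overrightarrow{R}) \subseteq L(\overrightarrow{R_{i_0}})$, so $\varphi\, L(\overrightarrow{R_{i_0}})\, \psi$, and the bisimulation property of $R_{i_0}$ delivers the required implication. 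Clause (2) is handled symmetrically using the antitonicity of $\leftarrow$.

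To finish: for a family $\{R_i\}_{i \in I} \subseteq \mathcal{B}$, form $R = \bigcup_i R_i$, which is a neighbourhood bisimulation by the union lemma but typically not closed. Apply Theorem \ref{main} to the topological closure $\overline{R}$ to conclude that $\overline{R}$ is itself a neighbourhood bisimulation; by construction it is closed. It is clearly an upper bound of $\{R_i\}$, and it is the \emph{least} closed upper bound, since any closed upper bound $S \in \mathcal{B}$ must contain $R$ and, being closed, must then contain $\overline{R}$. Hence $\overline{R}$ is the join of the family in $\mathcal{B}$, completing the argument.

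I expect no serious obstacle: the heavy lifting has already been done by Theorem \ref{main}. The only subtlety worth flagging is that, because $\overrightarrow{(-)}$ and $\overleftarrow{(-)}$ are antitone in their argument, it is \emph{unions} (rather than intersections) of neighbourhood bisimulations that behave well, which is precisely why the join must be constructed as the closure of a union instead of, say, as a direct intersection of closed upper bounds.
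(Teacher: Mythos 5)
Your proposal is correct and follows essentially the same route as the paper: both rest on the observation that arbitrary unions of neighbourhood bisimulations are neighbourhood bisimulations (you spell out the antitonicity/L1 argument the paper leaves as "easy to check") and then obtain joins as topological closures of unions via Theorem \ref{main}. The only cosmetic difference is that you derive meets abstractly as suprema of lower bounds, whereas the paper writes down the meet explicitly as the closure of the union of all neighbourhood bisimulations contained in the intersection; these agree.
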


\begin{proof}
It is easy to check that the union of any family of neighbourhood bisimulations is a neighbourhood bisimulation. So let $S$ be any collection of closed neighbourhood bisimulations. The meet of these neighbourhood bisimulations is given by the topological closure of the union of all neighbourhood bisimulations contained in $\bigcap S$, and the join is given by the topological closure of $\bigcup S$.
\end{proof}
\section{Conclusion and Future Work}
In this paper we presented an alternative construction for lifting a set coalgebra to its Stone companion using the Moss-style $\nabla$-modality. We introduced a notion of bisimulation, namely neighbourhood bisimulation, for Stone coalgebras, and showed that it is sound and complete with respect to behavioural equivalence. Our notion of bisimulation generalizes Vietoris bisimulation introduced in \cite{BeFoVe10}--indeed, in the concrete case of the powerset functor, we showed that closed neighbourhood bisimulations are same as Vietoris bisimulations. We proved that  neighbourhood bisimulations are closed under topological closure, which is the main result of our paper. The main result in \cite{BeFoVe10}, stating that the topological closure of a Kripke bisimulation is a Vietoris bisimulation, follows as a corollary of our main result.

The Vietoris construction was originally presented in the context of compact Hausdorff spaces \cite{Vi22}. Therefore, a natural direction would be to generalize our results to coalgebras on compact Hausdorff spaces. We were able to show that neighbourhood bisimulations are sound and complete with respect to behavioural equivalence for coalgebras on compact Hausdorff spaces. However, the proof of closure of neighbourhood bisimulations under topological closure does not generalize for compact Hausdorff spaces. We leave this an interesting open problem.

\bibliographystyle{abbrv}
\bibliography{uepl_arxiv}

\begin{thebibliography}{10}

\bibitem{Aczel88}
P.~Aczel.
\newblock {\em Non-well-founded Sets}.
\newblock Number~14 in Lecture Notes. Center for the Study of Language and
  Information, Stanford University, 1988.

\bibitem{adamekrosicky}
J.~Adámek and J.~Rosický.
\newblock {\em Locally Presentable and Accessible Categories}.
\newblock Cambridge University Press, 1994.

\bibitem{BeFoVe10}
N.~Bezhanishvili, G.~Fontaine, and Y.~Venema.
\newblock Vietoris bisimulations.
\newblock {\em Journal of Logic and Computation}, 20(5):1017--1040, 2010.

\bibitem{BRV01}
P.~Blackburn, M.~de~Rijke, and Y.~Venema.
\newblock {\em Modal Logic}, volume~53 of {\em Cambridge Tracts in Theoretical
  Computer Science}.
\newblock Cambridge University Press, 2001.

\bibitem{bredon}
G.~E. Bredon.
\newblock {\em Topology and Geometry}.
\newblock Springer Verlag, 1991.

\bibitem{CirsteaEA08}
C.~Cirstea, A.~Kurz, D.~Pattinson, L.~Schr{\"o}der, and Y.~Venema.
\newblock Modal logics are coalgebraic.
\newblock In S.~Abramsky, E.~Gelenbe, and V.~Sassone, editors, {\em Visions of
  Computer Science, BCS International Academic Research Conference (BCS 2008)},
  pages 129--140. British Computer Society, 2008.

\bibitem{Cleaveland96}
R.~Cleaveland and S.~Sims.
\newblock The {NCSU} concurrency workbench.
\newblock pages 394--397. Springer-Verlag, 1996.

\bibitem{deVink1999}
E.~de~Vink and J.~Rutten.
\newblock Bisimulation for probabilistic transition systems: a coalgebraic
  approach.
\newblock {\em Theoretical Computer Science}, 221(1–2):271 -- 293, 1999.

\bibitem{forti1983set}
M.~Forti and F.~Honsell.
\newblock Set theory with free construction principles.
\newblock {\em Annali della Scuola Normale Superiore di Pisa-Classe di
  Scienze}, 10(3):493--522, 1983.

\bibitem{HaKuPa09}
H.~Hansen, C.~Kupke, and E.~Pacuit.
\newblock Neighborhood structures: bisimilarity and basic model theory.
\newblock {\em Logical Methods in Computer Science}, 5(2):1--38, 2009.

\bibitem{Jacobs97atutorial}
B.~Jacobs and J.~Rutten.
\newblock A tutorial on (co)algebras and (co)induction.
\newblock {\em EATCS Bulletin}, 62:62--222, 1997.

\bibitem{johnstone1986}
P.~Johnstone.
\newblock {\em Stone Spaces}.
\newblock Cambridge Studies in Advanced Mathematics. Cambridge University
  Press, 1986.

\bibitem{KKP05}
C.~Kupke, A.~Kurz, and D.~Pattinson.
\newblock Ultrafilter extensions for coalgebras.
\newblock In J.~Fiadeiro, N.~Harman, M.~Roggenbach, and J.~Rutten, editors,
  {\em Algebra and Coalgebra in Computer Science}, volume 3629 of {\em Lecture
  Notes in Computer Science}, pages 263--277. Springer Berlin Heidelberg, 2005.

\bibitem{KKV05}
C.~Kupke, A.~Kurz, and Y.~Venema.
\newblock Stone coalgebras.
\newblock {\em Theoretical Computer Science}, 327(1–2):109 -- 134, 2004.

\bibitem{KKV12}
C.~Kupke, A.~Kurz, and Y.~Venema.
\newblock Completeness for the coalgebraic cover modality.
\newblock {\em Logical Methods in Computer Science}, 8(3), 2012.

\bibitem{KupkeP11}
C.~Kupke and D.~Pattinson.
\newblock Coalgebraic semantics of modal logics: An overview.
\newblock {\em Theor. Comput. Sci.}, 412:5070--5094, 2011.

\bibitem{Le08}
R.~A. Leal.
\newblock Predicate liftings versus nabla modalities.
\newblock {\em Electronic Notes in Theoretical Computer Science}, 203(5):195 --
  220, 2008.
\newblock Proceedings of the Ninth Workshop on Coalgebraic Methods in Computer
  Science (CMCS 2008).

\bibitem{MartiV12}
J.~Marti and Y.~Venema.
\newblock Lax extensions of coalgebra functors.
\newblock In D.~Pattinson and L.~Schr{\"{o}}der, editors, {\em Coalgebraic
  Methods in Computer Science - 11th International Workshop, {CMCS} 2012,
  Colocated with {ETAPS} 2012, Tallinn, Estonia, March 31 - April 1, 2012,
  Revised Selected Papers}, volume 7399 of {\em Lecture Notes in Computer
  Science}, pages 150--169. Springer, 2012.

\bibitem{Milner1982}
R.~Milner.
\newblock {\em A Calculus of Communicating Systems}.
\newblock Springer-Verlag New York, Inc., 1982.

\bibitem{Milner91}
R.~Milner.
\newblock Handbook of theoretical computer science (vol. b).
\newblock chapter Operational and Algebraic Semantics of Concurrent Processes,
  pages 1201--1242. MIT Press, Cambridge, MA, USA, 1990.

\bibitem{Moss99}
L.~S. Moss.
\newblock Coalgebraic logic.
\newblock {\em Ann. Pure Appl. Logic}, 96(1-3):277--317, 1999.

\bibitem{Park81}
D.~Park.
\newblock Concurrency and automata on infinite sequences.
\newblock In P.~Deussen, editor, {\em Theoretical Computer Science}, volume 104
  of {\em Lecture Notes in Computer Science}, pages 167--183. Springer Berlin
  Heidelberg, 1981.

\bibitem{Plotkin04}
G.~D. Plotkin.
\newblock A structural approach to operational semantics.
\newblock {\em J. Log. Algebr. Program.}, 60-61:17--139, 2004.

\bibitem{Rutten2003}
J.~Rutten.
\newblock Universal coalgebra: a theory of systems.
\newblock {\em Theoretical Computer Science}, 249(1):3 -- 80, 2000.

\bibitem{Sangiorgi2009}
D.~Sangiorgi.
\newblock On the origins of bisimulation and coinduction.
\newblock {\em ACM Trans. Program. Lang. Syst.}, 31(4):15:1--15:41, 2009.

\bibitem{Schroder:2008}
L.~Schr\"{o}der.
\newblock Expressivity of coalgebraic modal logic: The limits and beyond.
\newblock {\em Theor. Comput. Sci.}, 390(2-3):230--247, 2008.

\bibitem{vanBenthem83}
J.~van Benthem.
\newblock {Correspondence Theory}.
\newblock In D.~M. Gabbay and F.~Guenthner, editors, {\em Handbook of
  philosophical logic}, volume~3, pages 325--408. Kluwer Academic Publishers,
  2001.

\bibitem{Venema2006}
Y.~Venema.
\newblock Automata and fixed point logic: A coalgebraic perspective.
\newblock {\em Information and Computation}, 204(4):637 -- 678, 2006.

\bibitem{VVV13}
Y.~Venema, S.~Vickers, and J.~Vosmaer.
\newblock Generalised powerlocales via relation lifting.
\newblock {\em Mathematical Structures in Computer Science}, 23(1):142--199,
  2013.

\bibitem{Vi22}
L.~Vietoris.
\newblock {Bereiche zweiter Ordnung.}
\newblock {\em Monatsh. f. Math.}, 32:258--280, 1922.

\end{thebibliography}

\section*{\appendixname}

\paragraph{Proof of Lemma \ref{monotone}}
\begin{proof}
Given that the premise holds, we need to show that $a (\overline{T}\in_\mathbb{X}) \varphi$ implies $a (\overline{T}\in_\mathbb{X}) \psi$. So suppose $a (\overline{T}\in_\mathbb{X}) \varphi$. Then 
$$a (\overline{T}\in_\mathbb{X}); (\overline{T}\subseteq_\mathbb{X}) \psi $$
Since clearly $\in_\mathbb{X};\subseteq_\mathbb{X} = \in_\mathbb{X}$, we get 
$$(\overline{T}\in_\mathbb{X} );(\overline{T}\subseteq_\mathbb{X}) = \overline{T}(\in_\mathbb{X};\subseteq_\mathbb{X}) = \overline{T}\in_\mathbb{X}$$
and so $a (\overline{T}\in_\mathbb{X}) \psi$ as required.
\end{proof}
\paragraph{Proof of Proposition \ref{sufficient}}
\begin{proof}
We prove only one direction of the condition required for $R$ to be a neighborhood bisimulation, since the other direction is proved in the same way. So suppose $u R v$, $\nabla_\mathbb{X}\varphi \in \alpha_\mathbb{X}(u)$ and $\varphi L \overrightarrow{R} \psi$. We want to show that $\psi \in \beta_\mathbb{Y}(v)$. Note that 
$$\nabla_\mathbb{X}\varphi \in \alpha_\mathbb{X}(u) \Leftrightarrow \alpha(u) \in \nabla_\mathbb{X} \varphi \Leftrightarrow \alpha(u) L(\in_\mathbb{X}) \varphi$$
and similarly for $\beta_\mathbb{Y}(v)$ and $\psi$. So we know that $ \alpha(u) L(\in_\mathbb{X}) \varphi$ and we want to show that $ \beta(v) L(\in_\mathbb{Y}) \psi$. We also know that $\alpha(u) LR \beta(v)$, since $R$ was assumed to be an $L$-bisimulation. We get
$$\beta(v)\; (LR^\dagger) \;\alpha(u)\;( L\in_\mathbb{X}) \; \varphi \;(L\overrightarrow{R}) \; \psi$$
and so  
$$\beta(v) L(R^\dagger;\in_\mathbb{X} ;\overrightarrow{R} ) \psi$$
Thus, it suffices to prove that
$$R^\dagger;\in_\mathbb{X} ;\overrightarrow{R} \; \subseteq \;\in_\mathbb{Y}$$
and the reader can easily check this to be true.
\end{proof}

\paragraph{Proof of Theorem \ref{comparison}}

\begin{proof}
Since $R$ is a Vietoris bisimulation it is a Kripke bisimulation between the underlying Kripke frames $(X,\alpha)$ and $(Y,\beta)$. So by Proposition \ref{sufficient} it is a neighbourhood bisimulation  between $(\mathbb{X},\alpha_\mathbb{X})$ and $(\mathbb{Y},\beta_\mathbb{Y})$.

For the converse direction, suppose $R$ is a  neighbourhood bisimulation, and let $u R v$.  We show that for every $u' \in \alpha(u)$, there is $v' \in \beta(v)$ with $u ' R v'$ (the converse direction is proved in the same manner).   We first prove the following:
\begin{description}
\item[Claim:] For any clopen $Z$ in $\mathbb{X}$, if $\alpha(u) \in \Diamond Z$ then $\beta(v) \in \Diamond R[Z]$ 
 \end{description}
Proof of the Claim: suppose $\alpha(u) \in \Diamond Z$. Since $Z$ is clopen, $R[Z]$ is closed, so 
$$R[Z] = \bigcap\{Z' \supseteq R[Z] \mid Z' \text{ clopen in } \mathbb{Y}\}$$ 
Let us write $F$ for the family of clopen sets $Z' \supseteq R[Z]$. We need to show that $\bigcap F \cap \beta(v) \neq \emptyset$. Suppose that  $\bigcap F \cap \beta(v) = \emptyset$; since these are all closed sets, there must be a finite family $F' \subseteq F$ such that $\bigcap F' \cap \beta(v) = \emptyset$. But then,  $\bigcap F'$ is a clopen set and $R[Z] \subseteq \bigcap F'$, hence  $\beta(v) \in \Diamond \bigcap F'$ since $R$ was a neighbourhood bisimulation. This is a contradiction with $\bigcap F' \cap \beta(v) = \emptyset$.

We now carry out the main proof: pick $w \in \alpha(u)$. Then for every clopen $Z$ with $w \in Z$, we have $\alpha(u) \in \Diamond Z$, and so  we have $\beta(v) \in \Diamond R[Z]$ by the previous Claim.  So there is some $s_Z \in Z$ and some $t_Z \in \beta(v) $ with $s_Z R t_Z$. We construct a net in $R$ as follows: let $(I,\supseteq)$ be the directed set with $I$ being the clopen neighbourhoods of $w$, and $\supseteq$ being the converse of the inclusion order. Then $(s_Z,t_Z)_{Z \in I}$ is a net in $R$. Since $R$ is a compact Hausdorff space, there exists a subnet of $(I,\supseteq)$ that converges to some limit point. That is, there is a directed set $(D,\leq)$ and a monotone map $F : (D,\leq) \rightarrow (I,\supseteq)$ such that the image of $F$ is cofinal in $(I,\supseteq)$, and such that the net
$(s_{F(a)},t_{F(a)})_{a \in D}$ converges to some limit point $(p,q) \in R$.

The net $(t_{F(a)})_{a \in D}$ then converges to $q$ in $\mathbb{Y}$ since the projection map $\pi_\mathbb{Y}$ is continuous, and so since this is a net in $\beta(v)$ and $\beta(v)$ is closed, we get $q \in \beta(v)$. It only remains to show that $p = w$. But since limits of nets in a Hausdorff space are unique and since the net $(s_{F(a)})_{a \in D}$ converges to $p$, it suffices to show that this net converges to $w$. So suppose $O$ is some open neighbourhood of $w$. Then, since $\mathbb{X}$ is a Stone space and so has a clopen basis, there is a clopen set $S \subseteq O$ with $w \in S$. By cofinality, there is some $a \in D$ with $F(a) \subseteq S$. We show that, for every $b \geq a$, we have $s_{F(b)} \in O$: by definition, $s_{F(b)} \in F(b)$, and since $$F(b) \subseteq F(a) \subseteq S \subseteq O$$
we get $s_{F(b)} \in O$ as required. 
\end{proof}

\end{document}